\theoremstyle{plain}
\newtheorem{proposition}{Proposition}
\theoremstyle{definition}
\newtheorem{definition}{Definition}
\newtheorem{example}{Example}
\newtheorem{conjecture}{Conjecture}
\title[Topology of CSAS]{The Topology of Circular Synthetic Aperture Sonar Targets}
\author{Michael Robinson, Zander Memon}
\address{Mathematics and Statistics\\
American University\\
Washington, DC, USA}
\email{michaelr@american.edu}
\begin{document}

\begin{abstract}
  This report presents a connection between the physical acoustics of an object and the topology of the space of echoes resulting from a circular synthetic aperture sonar (CSAS) collection of that object.  A simple theoretical model is developed that yields a precise, yet qualitative, description of the space of echoes.  This theoretical model is validated in simulation and with experimental data from a laboratory sonar system.
\end{abstract}

\maketitle

\tableofcontents

\section{Introduction}

A circular synthetic aperture sonar (CSAS) collection can be formalized as a smooth function $u: S^1 \to \mathbb{C}^n$ from the circle $S^1$ (defining \emph{look angle}) to a vector of complex samples in $\mathbb{C}^n$.
Our previous work has revealed that understanding the topological and geometric structure of the \emph{signature space}, which is the image of $u$, can aid classification tasks \cite{sonarspace}.  Additionally, one can factor $u$ into the composition of two carefully chosen functions $u = U \circ \phi$ as described in \cite{Robinson_qplpf, Robinson_SampTA_2015}.  The intermediate space created, called the \emph{phase space}, provides a means to decouple the structure of the target from its environment \cite{Robinson_constrank}.  This report combines these two insights to determine how the physical acoustical structure of a CSAS collection impacts the signature and phase spaces.

\section{Theoretical model}

The topology of the signature space of a CSAS collection is highly constrained.  Intuitively, strong constraints on the signature space ensure that certain physical properties of a CSAS target will translate into its topological structure.

Because of Sard's theorem, critical points of an arbitrary smooth function $u$ are rare; they form a set of measure $0$.  In most physical situations, the critical points with extremal critical values will be isolated.  As an approximation, we will assume that all critical points are isolated.

\begin{proposition}
  \label{prop:csas_direct}
  Suppose that $u: S^1 \to \mathbb{C}^n$ is a smooth function with isolated critical points.  Then the image of $u$ is homeomorphic to a path-connected, compact, $1$-dimensional cell complex.  Moreover the image is homotopy equivalent to a wedge product of finitely many circles.
\end{proposition}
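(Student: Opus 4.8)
The plan is to extract a finite combinatorial skeleton from $u$ and then run a standard graph-homotopy argument. First I would record the easy structural facts: identifying $\mathbb{C}^n$ with $\mathbb{R}^{2n}$, the map $u$ is continuous and $S^1$ is compact and path-connected, so the image $X := u(S^1)$ is automatically compact and path-connected. A critical point is a $\theta$ with $u'(\theta)=0$; by hypothesis the critical set $C \subset S^1$ is discrete, and since $S^1$ is compact, $C = \{\theta_1,\dots,\theta_k\}$ is finite. Deleting $C$ cuts $S^1$ into finitely many open arcs, on each of which $u$ is an immersion of a $1$-manifold and hence, by the local immersion theorem, a local embedding. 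Thus $X$ is covered by finitely many compact arcs, each locally embedded.

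The core of the argument is to promote this ``finitely many immersed arcs'' picture to an honest finite $1$-dimensional CW structure. My intended recipe is to take as $0$-cells the images $u(\theta_i)$ of the critical points together with all self-intersection points of $X$, and as $1$-cells the open embedded sub-arcs lying between consecutive $0$-cells. Provided the $0$-cells form a finite set, this is a finite graph: each edge is an embedded open interval, the attaching maps send endpoints to vertices, and because everything is compact the CW weak topology agrees with the subspace topology inherited from $\mathbb{R}^{2n}$. Path-connectedness descends from $S^1$, so $X$ is then homeomorphic to a connected finite $1$-dimensional cell complex.

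The hard part --- and the step where the bare hypothesis of isolated critical points does the least work --- is guaranteeing that the set of self-intersections is finite, so that the vertex set is finite and no wild accumulation (a Hawaiian-earring-type image) can occur. Away from $C$ the map is an immersion, but an immersed arc may a priori cross itself infinitely often, with crossings accumulating at a critical value. I would close this gap by a transversality argument in the spirit of the Sard-theorem remark preceding the statement: after an arbitrarily small perturbation the self-intersections of an immersed curve in $\mathbb{R}^{2n}$ are transverse double points, hence isolated and, by compactness, finite (indeed for $n \ge 2$ one has $2n \ge 4$, so generically the image is embedded outright). One could alternatively impose real-analyticity of $u$, under which the coincidence set $\{(\theta,\theta') : u(\theta)=u(\theta')\}$ is an analytic subvariety of $S^1 \times S^1$ with finitely many components off the diagonal. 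Either route yields the required finiteness, and I would flag this as the essential tameness condition hiding inside the phrase ``isolated critical points.''

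Finally, with $X$ identified as a connected finite graph $G$ having $V$ vertices and $E$ edges, the homotopy statement is routine: choose a spanning tree $T \subseteq G$, observe that $T$ is contractible and that the inclusion $T \hookrightarrow G$ is a cofibration, so collapsing $T$ is a homotopy equivalence. The quotient $G/T$ is a wedge of circles, one for each edge outside $T$, so $X \simeq \bigvee^{\,E-V+1} S^1$. Since $E$ and $V$ are finite, this is a wedge of finitely many circles, completing the argument.
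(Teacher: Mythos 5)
Your outline is the same as the paper's: cut $S^1$ at the (finitely many, by compactness) critical points, use the images of the critical points as $0$-cells and the images of the complementary open arcs as $1$-cells, then get the homotopy type by collapsing a spanning tree of the resulting finite graph. Where you diverge is in the step you flag as ``the hard part,'' and you are right to flag it. The paper's proof asserts that $u$ restricted to each open arc is a local diffeomorphism and that ``the image of such an open interval is therefore also an interval,'' and then declares the pieces to be cells. This skips exactly the issue you isolate: a local embedding of an arc need not be a global embedding, the images of distinct arcs can meet away from the critical values, and nothing in the hypothesis of isolated critical points prevents the self-intersection locus from being infinite. Indeed one can write down a smooth immersion of $S^1$ into $\mathbb{R}^2$ (no critical points at all) containing a straight segment together with the graph of a flat oscillating function such as $e^{-1/x}\sin(1/x)$; the two arcs meet in a sequence of points accumulating at the origin, producing infinitely many independent loops, so the image is neither a finite cell complex nor homotopy equivalent to a finite wedge of circles. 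So the tameness you ask for is genuinely needed and is not supplied by the stated hypothesis.

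That said, neither of your proposed patches proves the proposition as literally stated: perturbing $u$ to make self-intersections transverse proves the claim for a nearby map rather than for $u$ itself, and assuming real-analyticity strengthens the hypothesis. If you want a clean statement you should add an explicit assumption (e.g.\ that $u$ is real-analytic, or that the self-intersection set of the image is finite, or that $u$ is generic in the transversality sense), after which your vertex set is finite, your edge arcs are embedded and disjoint, the weak and subspace topologies agree by compactness, and the spanning-tree collapse finishes the homotopy claim exactly as you describe. Your write-up is therefore more honest than the paper's own proof: it carries out the same construction but names the tameness condition that the paper leaves implicit.
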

\begin{proof}
  Since $u$ is smooth, it is continuous.  Therefore, the image of $u$ inherits path-connectedness, compactness, and its maximum dimension from $S^1$.

  Because the critical points are assumed to be isolated, compactness implies that there are finitely many critical points.  We may therefore partition $S^1$ into the disjoint union of these critical points and finitely many connected open intervals.  This yields a cell complex structure for $S^1$.  The inverse function theorem ensures that $u$ is a local diffeomorphism (hence local homeomorphism) on each of the open intervals.  The image of such an open interval is therefore also an interval.  We therefore have a partition of the image of $u$ into a union of $0$-dimensional cells (the critical points) and $1$-dimensional cells (the intervals).

  It merely remains to show that each interval is attached to a critical point at each of its ends.  Consider an interval that is the image of $(\theta_1,\theta_2) \subseteq S^1$.  By construction, $\theta_1$ and $\theta_2$ are critical points of $u$, which may be the same point.  Therefore, $u((\theta_1,\theta_2))$ is attached to these two critical points.  In short, considering the cell complex decomposition of $S^1$, every critical point corresponds to a vertex of degree $2$.  

  Finally, every path-connected, $1$-dimensional cell complex with finitely many cells is homotopy equivalent to a wedge product of circles.  
\end{proof}

The image of $u$ may not be \emph{homeomorphic} to a wedge product of finitely many circles.  The cell complex structure for the image of $u$ might contain degree $1$ vertices, which can occur when the intervals on either side of such a vertex map to the same locus of points.

\begin{figure}
  \begin{center}
    \includegraphics[width=4in]{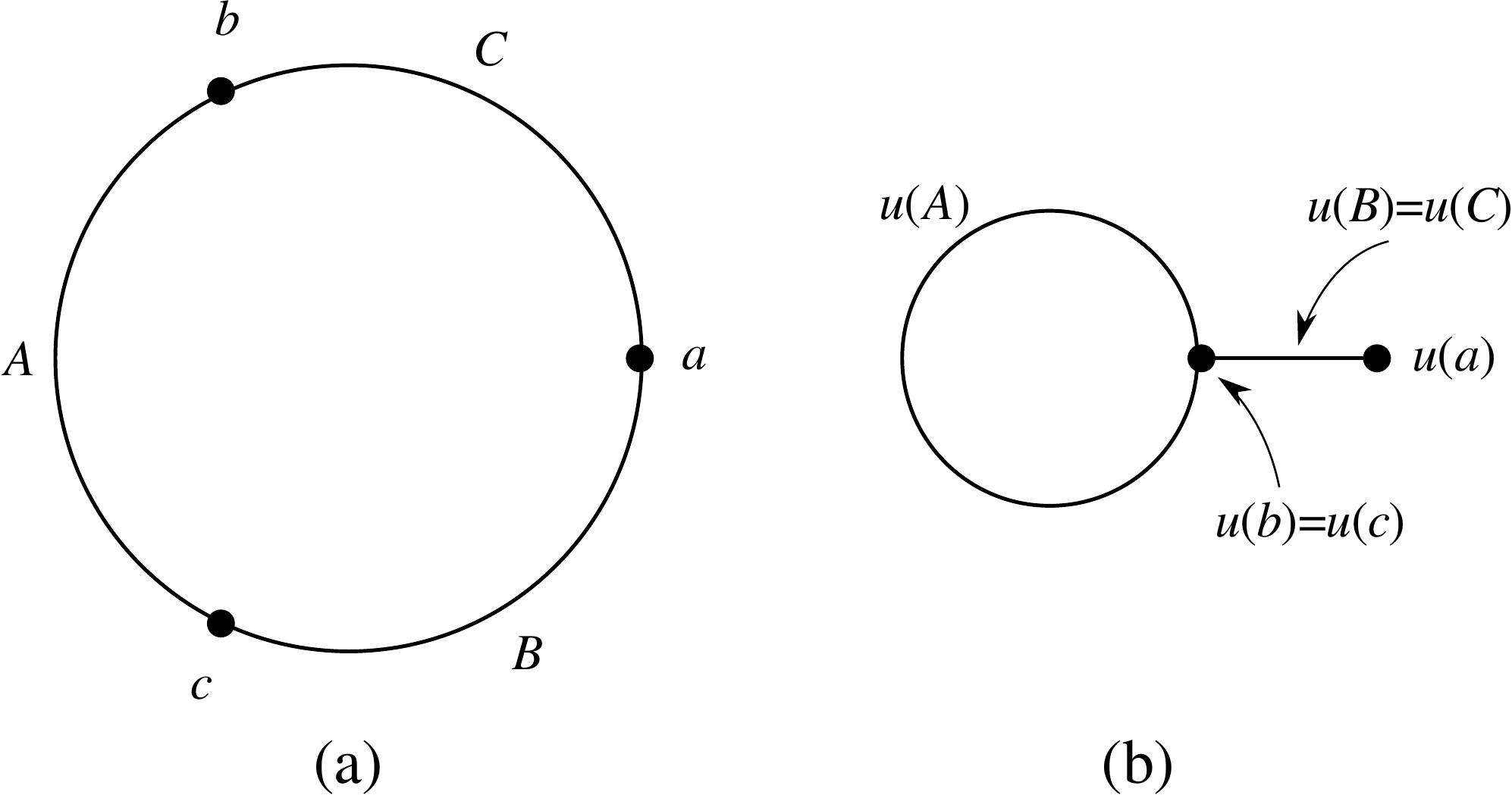}
    \caption{Formation of a flare in the image of a smooth map, as described in Example \ref{eg:collapse}.  (a) A cell complex decomposition of the circle.  (b) The image of a smooth map $u$, in which two of the $1$-dimensional cells are identified.}
    \label{fig:collapse}
  \end{center}
\end{figure}

\begin{example}
  \label{eg:collapse}
  Consider the case of a smooth map $u : S^1 \to \mathbb{R}^2$ with three critical points $a,b,c \in S^1$.  This induces a cell complex decomposition of the circle as shown in Figure \ref{fig:collapse}(a).  The map may be defined so that the critical point $a$ is flanked by two intervals with the same image as is shown in Figure \ref{fig:collapse}(b).  Notice that the degree of $a$ is $2$, but that the degree of $u(a)$ is $1$.  
\end{example}
    
A $1$-dimensional cell with a degree $1$ endpoint is called a \emph{flare}.  In CSAS data, flares correspond to specular reflections that are left/right symmetric.  With this insight, Proposition \ref{prop:csas_direct} is experimentally confirmed in Section \ref{sec:experimental}.

Our recent report \cite{Robinson_constrank} revealed that it is possible to decouple the sensor trajectory and signal propagation effects from the target structure using the idea of a \emph{quasiperiodic factorization}.

\begin{definition}\cite{Robinson_qplpf,Robinson_SampTA_2015}
  Recall that a \emph{submersion} is a constant rank map in which the Jacobian matrix is surjective as a linear map at every point in the domain.
  A \emph{quasiperiodic factorization of a smooth map $u: M \to N$} consists of a factorization $u = U \circ \phi$ in which $\phi : M \to C$ is a surjective submersion.  The space $C$ is called the \emph{phase space} for $u$.
\end{definition}

One can interpret the $\phi$ map of a quasiperiodic factorization as characterizing the trajectory and propagation effects.  The $C$ space represents the idealized space of echoes for that particular target.  The $U$ map represents the idealized structure of those echoes.

The surjective submersion assumption on $\phi$ ensures that the phase space $C$ is topologically similar to $M$.  This transfers all critical points of $u$ to critical points of $U$.  The surjective submersion assumption is a powerful inferential tool in the case of CSAS, because the domain $M$ has a very constrained and well-understood structure, namely it is a circle.

\begin{proposition}
  \label{prop:phase_structure}
  Suppose that $\phi: S^1 \to C$ is a surjective submersion of the circle.  Then $C$ is homeomorphic to a path-connected, compact, $1$-dimensional cell complex in which no vertex has degree $1$.
\end{proposition}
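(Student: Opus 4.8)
The plan is to show that the submersion hypothesis forces $C$ to be an honest $1$-manifold, and then to invoke the classification of $1$-manifolds. First I would observe that since $\phi$ is a submersion, its Jacobian is surjective at every $\theta \in S^1$; as $S^1$ is $1$-dimensional this forces $\dim C \le 1$. The interesting case is $\dim C = 1$, since the degenerate case $\dim C = 0$ collapses $C$ to a single point, which trivially satisfies the conclusion. When $\dim C = 1$, the differential $d\phi_\theta$ is a surjection between $1$-dimensional spaces and hence an isomorphism, so by the inverse function theorem $\phi$ is a local diffeomorphism, and in particular a local homeomorphism. Because $\phi$ is continuous and surjective, $C$ inherits path-connectedness and compactness from $S^1$, exactly as in the proof of Proposition \ref{prop:csas_direct}.

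The crux is the next step: promoting ``local homeomorphism'' to ``manifold without boundary.'' Since $\phi$ is surjective, every point of $C$ equals $\phi(\theta)$ for some $\theta$, and the local homeomorphism property supplies a neighborhood of that point homeomorphic to an open interval, hence to an open subset of $\mathbb{R}$. Thus $C$ is a topological $1$-manifold, and crucially it has \emph{no} boundary: a boundary point of a $1$-manifold admits only half-open-interval neighborhoods, which cannot be the homeomorphic image of an open interval, so each $\phi(\theta)$ is an interior point. This is precisely where the submersion hypothesis does its work, since it is the absence of critical points that rules out the folding phenomenon producing the degree-$1$ vertices (flares) permitted by Proposition \ref{prop:csas_direct}.

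Finally I would appeal to the classification of connected $1$-manifolds: a compact, connected, boundaryless topological $1$-manifold is homeomorphic to $S^1$. Endowing $S^1$ with any finite cell structure, every $0$-cell is an interior point of the manifold and therefore meets exactly two $1$-cells, so every vertex has degree $2$ and no vertex has degree $1$. This yields the desired cell-complex description of $C$, in fact identifying $C \cong S^1$.

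I expect the main obstacle to be the boundary argument of the second paragraph: one must carefully exclude the possibility $C \cong [0,1]$, whose endpoints would be degree-$1$ vertices, and the cleanest justification is that a local homeomorphism sends each point of its domain to a point possessing a full open-interval (Euclidean) neighborhood, i.e.\ an interior point, so $C$ has empty boundary. If a self-contained argument is preferred over citing the classification of $1$-manifolds, I would instead note that $\phi$, being a proper (its domain $S^1$ is compact) local homeomorphism onto a connected Hausdorff space, is a finite-sheeted covering map, which likewise pins down $C \cong S^1$.
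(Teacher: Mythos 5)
Your route---show $C$ is a compact connected $1$-manifold without boundary and invoke the classification theorem to get $C \cong S^1$---is genuinely different from the paper's, and it proves too much: the conclusion of Proposition \ref{prop:phase_structure} deliberately allows $C$ to be a cell complex that is \emph{not} a manifold, e.g.\ a figure-eight with a vertex of degree $4$, which arises whenever the immersed circle $\phi$ crosses itself. In the paper's setting $C$ is the image of $\Phi$ inside $\mathbb{C}^{nN}$ with the codomain restricted to that image, and the submersion hypothesis only guarantees that the one-dimensional Jacobian $\phi'$ never vanishes; it does not make $\phi$ injective. The gap is the promotion of ``nonvanishing derivative'' to ``local homeomorphism onto $C$.'' At a self-intersection point $c = \phi(\theta_1) = \phi(\theta_2)$ with $\theta_1 \neq \theta_2$, the inverse function theorem does give a small arc $\phi(U_1)$ through $c$ homeomorphic to an interval, but that arc is \emph{not open in} $C$: near $c$ the space $C$ looks like an X, and $\phi(U_1)$ covers only two of its four branches. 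So $c$ has no neighborhood in $C$ homeomorphic to an open interval, $C$ is not a $1$-manifold, and neither the classification of $1$-manifolds nor the covering-space argument in your last paragraph (which likewise requires $\phi$ to be an open map onto $C$) applies. Under a strict textbook reading in which $C$ is assumed a priori to be a smooth manifold, your argument does go through and yields the stronger conclusion $C \cong S^1$; but that reading is inconsistent with the examples the proposition is meant to cover.

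The paper's own proof is purely local at the putative bad vertex and sidesteps all of this: if $C$ had a degree-$1$ vertex $c$, then near $c$ the image is a half-open arc $[0,1)$ with $c$ at the closed end, so the curve $\phi$ must ``turn around'' there; at the preimage of $c$ the derivative $\phi'$ is therefore forced to vanish, contradicting the submersion hypothesis. Your boundary-point discussion in the second paragraph contains the same germ of an idea. If you localize it---arguing only that no point of $C$ can have a half-open-interval neighborhood that is a ``dead end,'' rather than that every point has an open-interval neighborhood---you recover a correct proof of the stated, weaker conclusion.
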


\begin{proof}
  Since $\phi$ is smooth, it is continuous.  $C$ inherits path-connectedness, compactness, and its dimension from $S^1$, because $\phi$ is also surjective.

  It remains to demonstrate degree $1$ vertices contradict the assumptions on $\phi$.  Suppose that $C$ does have a degree $1$ vertex.  This means that there is a subspace of $C$ that is homeomorphic to $[0,1)$, where the point $0$ maps to the degree $1$ vertex.  Notice that since the domain is $1$-dimensional, the Jacobian matrix of $\phi$ is just the usual derivative $\phi'$.  This derivative must evidently must take the value $\phi'(0)=0$ due to the fact that there are no points ``left of'' $0$ in $C=[0,1)$.  This means that $\phi$ has a critical point at $0$, which contradicts the fact that $\phi$ is a submersion and therefore cannot have critical points.
\end{proof}

In short, Proposition \ref{prop:phase_structure} asserts that the image of $\phi$ cannot contain flares.
In practice, one might expect to find apparent occurrences of flares in sampled data, due to the presence of aliasing and sampling error.

Using the algorithm described in \cite{Robinson_qplpf}, we may obtain a homeomorphic copy of $C$ from samples of a CSAS collection $u$.  The algorithm requires that we bundle angle-lagged copies of the domain, forming the vector
\begin{equation*}
  \Phi(\theta) := (u(\theta+\tau_1), u(\theta+\tau_2), u(\theta+\tau_3), \dotsc, u(\theta+\tau_N)),
\end{equation*}
where $\tau_1, \dotsc, \tau_N \in S^1$ are constants chosen arbitrarily from a certain dense subset of $(S^1)^N$.  (Intuitively, most choices for the constants $\tau_1, \dotsc, \tau_N$ are sufficient for the algorithm to work.)  For $N$ large enough, the Jacobian matrix of $\Phi$ becomes non singular for all $\theta$.  Therefore, a surjective $\phi$ can be obtained from $\Phi$ by restricting the codomain of $\Phi$ to its image.

\section{Qualitative structure of sonar targets}
\label{sec:qualitative}

As described in the previous section, we will formalize a CSAS collection as a smooth function $u: S^1 \to \mathbb{C}^n$.  The codomain $\mathbb{C}^n$ is naturally a Banach space, with norm $\|\cdot\|$.  Sonar signals are typically mean centered, so that the absence of an echo results in a signal $z \in \mathbb{C}^n$ with norm $\|z\| \approx 0$.  The receiver noise level sets the expected statistical properties of the norm of a signal in the absence of echoes.  These echoes which consist mostly of noise will appear as a Gaussian ``haze'' of echoes around the origin.

\begin{figure}
  \begin{center}
    \includegraphics[width=3.5in]{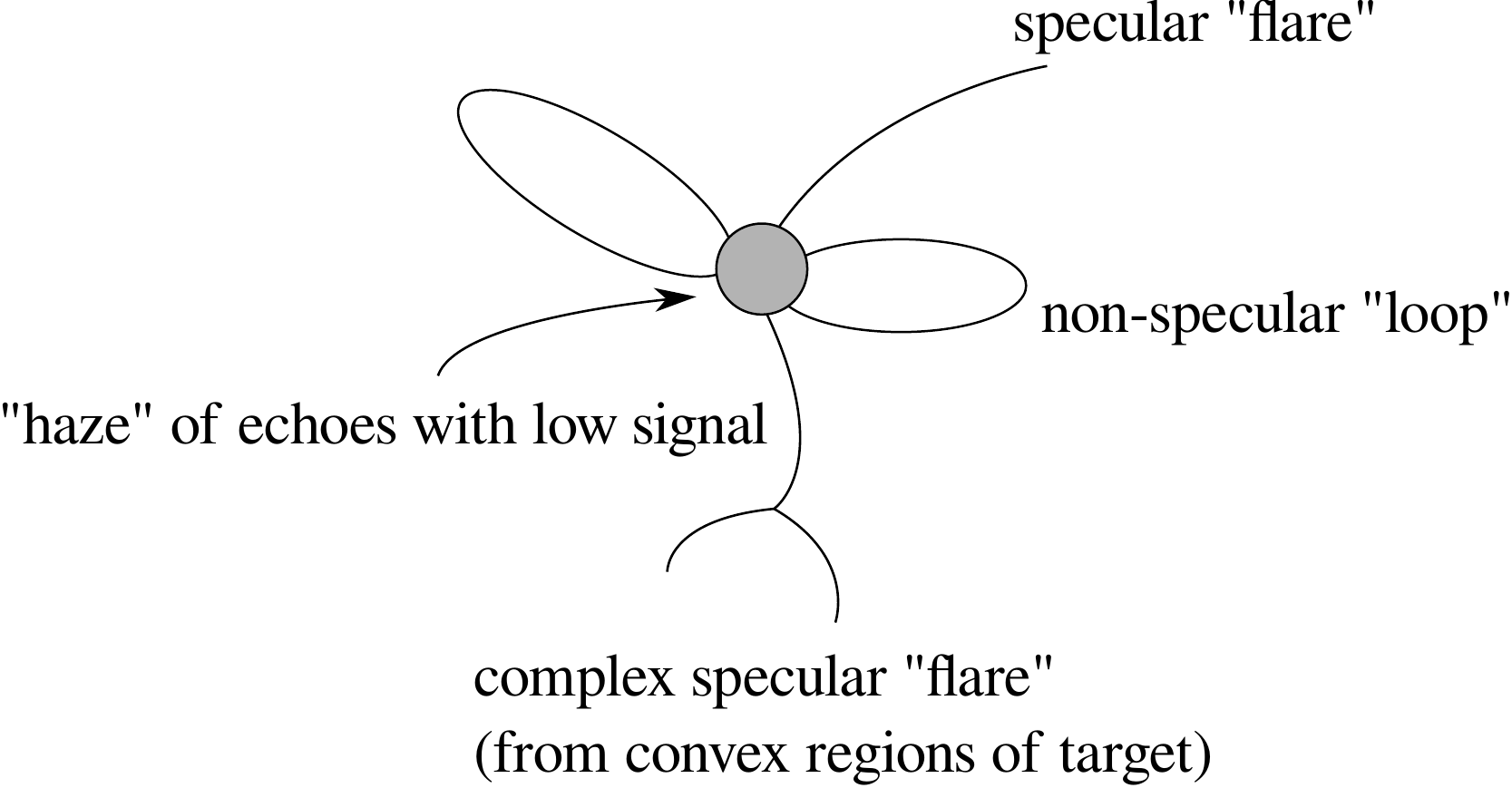}
    \caption{Schematic of the signature space for a typical CSAS target.}
    \label{fig:csas_schematic}
  \end{center}
\end{figure}

This consideration suggests that there are two kinds of geometric features that are expected in addition to the topological features.  The first is a preponderance of echoes near the origin, which correspond to angles for which minimal signal is received.  The second is that the local maximum norms of signals correspond to the length of the intervals from Proposition \ref{prop:csas_direct}.  These correspond to the acoustic cross-sections of specular reflections.  Informally, this suggests that the signature space will consist of a dense ball of echoes, to which specular ``flares'' and non-specular ``loops'' are attached, as shown in Figure \ref{fig:csas_schematic}.

In short, we expect that the number of prominent specular reflections to correspond to excursions of the signature space away from the origin.  If a specular reflection is left/right symmetric in angle it will appear as a ``flare,'' whose maximum norm corresponds to the sonar cross-section of the reflection.  On the other hand, if there is an asymmetry, it will appear as a ``loop,'' again whose maximum norm corresponds to the sonar cross-section of the reflection.

The ``flares'' do not correspond to topological features, and so we do not expect to see them reflected in the persistent homology of the signature space.  The ``loops'' do correspond to topological features, and so should be expected to be visible in the $H_1$ persistence diagram.  Their birth time tends to $0$ as the sample rate (samples per degree of look angle) increases.  Their death time corresponds to the semi-minor axis of the loop, which is governed by both the sonar cross-section of the reflection and the amount of left/right asymmetry it exhibits.

If we instead consider the phase space $C$ for the universal quasiperiodic factorization obtained using the algorithm described in \cite{Robinson_qplpf}, so that $u = U \circ \Phi$ and $\Phi: S^1 \to C$, then each ``flare'' is transformed into a loop via Proposition \ref{prop:phase_structure}.  In this case, the persistent homology can detect all prominent specular reflections, assuming their geometry is favorable.  In particular, instead of using the geometry obtained by pulling back the metric of $\mathbb{C}^n$ to $C$, we should use the natural norm on $C \subset \mathbb{C}^{nN}$, where $N$ is the number of angle-lagged copies in use.

In this case, the persistent $H_1(C)$ generators correspond to all of the prominent specular reflections.  Again, the birth time is dominated by the sample rate, and decreases to $0$ as the sample rate is increased.  For finite sample rates, the birth time involves the sonar cross section of each specular reflection, its beam width, and the sample rate.  The death time corresponds to the sonar cross section of the specular reflection as before, but also to its beam width.

\begin{conjecture}
  The persistence diagram of $H_1(C)$ can be deduced by the coefficients of $u$ when decomposed in an appropriate wavelet basis.
\end{conjecture}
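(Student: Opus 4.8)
The plan is to build a quantitative dictionary between the localized bursts that individual specular reflections contribute to $u$ and the persistent $H_1(C)$ generators they induce, and then to read both sides of that dictionary off of a wavelet decomposition of $u$. The physical starting point, already implicit in Section \ref{sec:qualitative}, is that a prominent specular reflection at look angle $\theta_0$ contributes to $u$ a term that is sharply localized in angle, with angular support equal to its beam width $w$, peak norm controlled by its sonar cross section $\sigma$, and center $\theta_0$. I would first make this precise by writing $u = u_{\mathrm{haze}} + \sum_\ell v_\ell$, where $u_{\mathrm{haze}}$ is the low-norm background ``haze'' and each $v_\ell$ is a bump of width $w_\ell$, amplitude $\sim \sigma_\ell$, centered at $\theta_\ell$.

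Next I would fix the wavelet basis to be one adapted to $S^1$ (a periodized compactly supported basis, or a Morlet-type continuous transform) whose dilations are matched to plausible beam widths; this matching is exactly what ``an appropriate wavelet basis'' must mean. The point of the choice is sparsity: each bump $v_\ell$ concentrates its energy in a small cluster of coefficients $c_{j,k}$ with scale $2^{-j} \approx w_\ell$ and translate $k\,2^{-j} \approx \theta_\ell$, of magnitude proportional to $\sigma_\ell$. Standard wavelet approximation theory then supplies a thresholded reconstruction $\tilde u$, built from the dominant coefficients alone, with $\|u - \tilde u\|$ (in the sup sense over $\theta$) controlled by the discarded coefficients.

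The rigorous bridge to persistence is the stability theorem for geometric complexes. Forming the delay vectors $\tilde\Phi$ from $\tilde u$ gives $\sup_\theta \|\Phi(\theta) - \tilde\Phi(\theta)\| \le \sqrt{N}\,\sup_\theta\|u(\theta) - \tilde u(\theta)\|$, so the images $C$ and $\tilde C$ are Hausdorff-close in $\mathbb{C}^{nN}$ and their Vietoris--Rips persistence diagrams lie within bottleneck distance a constant multiple of $\sqrt{N}\,\|u - \tilde u\|$. Hence the diagram of $H_1(C)$ is determined, up to an error vanishing as more coefficients are retained, by the dominant wavelet coefficients; it remains to compute the diagram of $\tilde C$ explicitly. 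Each bump $v_\ell$ drives an excursion of $\Phi$ away from the haze, and by Proposition \ref{prop:phase_structure} even a left/right symmetric (flare) excursion opens into a genuine loop, so every prominent reflection contributes exactly one generator. I would compute the semi-minor axis of that loop, identified in Section \ref{sec:qualitative} with the death time, as the transverse separation between the outgoing and returning strands of the excursion produced by the delay spread $\tau_i - \tau_j$ acting on $v_\ell$; to leading order this is a function of $\sigma_\ell$ and $w_\ell$, hence of the coefficient cluster. The birth time is the longitudinal sample spacing along the loop, a function of $w_\ell$ and the sample rate.

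The main obstacle is this last step: proving that the persistence of each loop is governed, to leading order, by the parameters extracted from its coefficient cluster. Two difficulties must be controlled. First, the delay embedding spreads a single bump at $\theta_\ell$ into $N$ firings at $\theta_\ell - \tau_i$, so I must show the resulting curve still presents one dominant $H_1$ feature per reflection rather than a proliferation of spurious loops; a genericity argument on the $\tau_i$, using the same dense-subset hypothesis that makes $\Phi$ an embedding, should isolate a single dominant loop. Second, when two reflections have overlapping beams their coefficient clusters and their loops interact, and bounding this interaction---again by applying the stability theorem to the difference between the true curve and a disjoint-bump model---is the most delicate part of the argument. It is precisely this estimate that determines the ``favorable geometry'' regime in which the conjecture can hold as stated.
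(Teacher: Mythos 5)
The statement you are addressing is posed in the paper as an open conjecture; the paper supplies no proof, so there is nothing to compare your argument against, and what you have written would need to stand entirely on its own. As it stands it does not: it is a credible research program whose central step is deferred rather than carried out. Everything up to the stability estimate is sound and standard --- the bump decomposition of $u$, the sparsity of each bump in a scale-matched periodized wavelet basis, the bound $\sup_\theta\|\Phi(\theta)-\tilde\Phi(\theta)\|\le\sqrt{N}\,\sup_\theta\|u(\theta)-\tilde u(\theta)\|$, and the resulting bottleneck-distance control. But that machinery only reduces the conjecture to computing the persistence diagram of the model curve $\tilde C$, and at that point you assert the conclusion rather than prove it. The claim that each bump $v_\ell$ contributes exactly one $H_1$ generator whose death time is ``to leading order a function of $\sigma_\ell$ and $w_\ell$'' is precisely the content of the conjecture. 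Proposition \ref{prop:phase_structure} does not deliver it: that proposition rules out degree-one vertices in $C$, which forces a symmetric flare to open into \emph{some} loop, but it says nothing about the metric size of that loop, and the Rips death time of a loop is governed by its filling geometry (for a round circle of radius $r$ it is $\sqrt{3}\,r$, not a semi-axis), not simply by ``the transverse separation between the outgoing and returning strands.''

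The second concrete gap is the single-loop-per-reflection claim. Under the delay map a single bump supported near $\theta_\ell$ fires once for each lag, on the intervals $\theta\in(\theta_\ell-\tau_i-w_\ell/2,\,\theta_\ell-\tau_i+w_\ell/2)$, so the excursion of $\Phi$ away from the haze consists of up to $N$ separate departures and returns whose mutual overlaps depend on the spacings $\tau_i-\tau_j$ relative to $w_\ell$. Genericity of the $\tau_i$ guarantees that $\Phi$ is an immersion, so that the factorization exists; it does not bound the number of $H_1$ classes these excursions create, and for well-separated lags one should expect several short-lived loops per reflection rather than one. There is also a dependence you suppress: the diagram of $H_1(C)$ with the metric inherited from $\mathbb{C}^{nN}$ depends on $N$ and on the chosen lags, so any dictionary from wavelet coefficients of $u$ alone must either fix these or show the answer is stable under changing them. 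Until you can compute, or at least two-sidedly bound, the Rips persistence of the delay-embedded image of a single bump --- and then control the interaction terms when beams overlap, which you correctly flag as the delicate step but do not attempt --- the correspondence between wavelet coefficient clusters and points of the $H_1(C)$ diagram remains a conjecture, not a theorem.
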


\section{Simulated data}
\label{sec:simulated}

Consider a target consisting of seven (7) point scatterers of equal reflectivity placed on a circle.  Suppose further that the scatterers are divided into two groups, one group consisting of 2 scatterers and one consisting of 5 scatterers.  Each group of scatterers is distributed equally spaced in angle around the center of the circle, but the two groups are not aligned with each other.

\begin{figure}
  \begin{center}
    \includegraphics[width=4in]{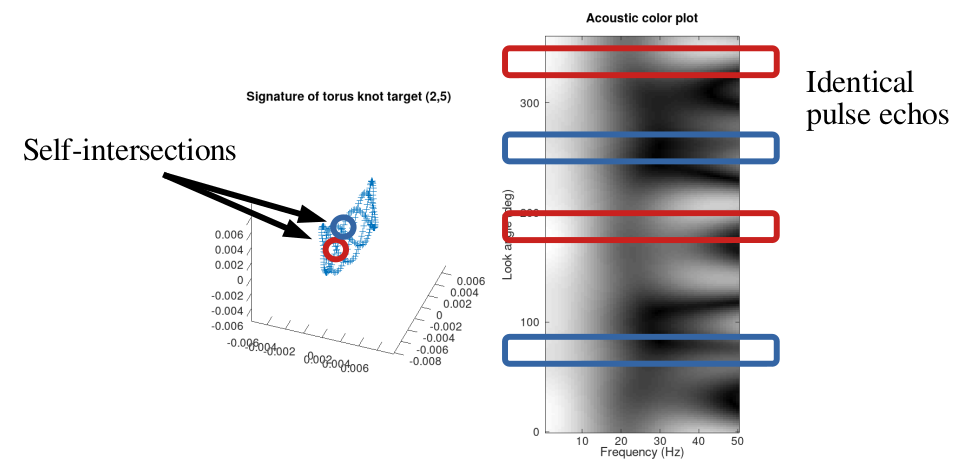}
    \caption{The signature space for a CSAS collection of a target consisting of seven point scatterers.}
    \label{fig:csas_direct}
  \end{center}
\end{figure}

If we illuminate the target with a CSAS system, the phase space for this set of scatterers is a torus knot \cite[Prop. 3]{Robinson_constrank}.  A visualization of the signature space (not the phase space), once projected to $3$ dimensions using principal components analysis, is shown in Figure \ref{fig:csas_direct}.  The signature space contains notable self-intersections.  These self-intersections are the places where the wedge products mentioned in Proposition \ref{prop:csas_direct} are performed.  No intervals are apparent in the plot.

\begin{figure}
  \begin{center}
    \includegraphics[width=4in]{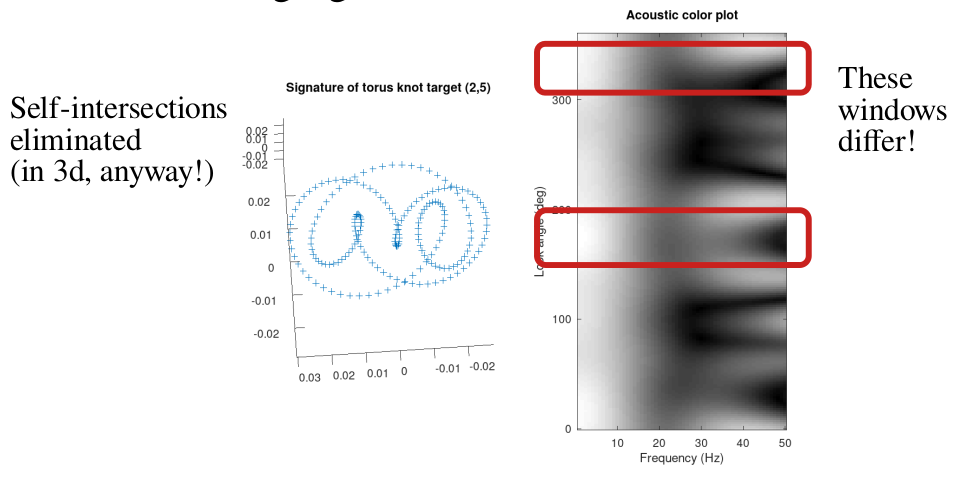}
    \caption{The phase space for a CSAS collection of a target consisting of seven point scatterers, in which sliding windows in angle are used.}
    \label{fig:csas_sliding}
  \end{center}
\end{figure}

The fact that there are self-intersections is apparently in contradiction with the torus knot structure, since torus knots have no self-intersections.
The algorithm described in \cite{Robinson_qplpf} instructs us to use angle-lagged copies of the data to resolve spurious self-intersections.  This produces the phase space shown in Figure \ref{fig:csas_sliding}, which exhibits no self-intersections in $3$-dimensions. We conclude that the algorithm in \cite{Robinson_qplpf} is indeed successful on the simulated data.

\section{Experimental validation}
\label{sec:experimental}

We considered CSAS collections of three household objects (a copper pipe, a styrofoam cup, and a coke bottle) in two configurations (open ends or capped ends) each using ARL/PSU's AirSAS system \cite{cowenairsas} as shown in Figure \ref{fig:airsas_targets}.

\begin{figure}
  \begin{center}
    \includegraphics[width=5in]{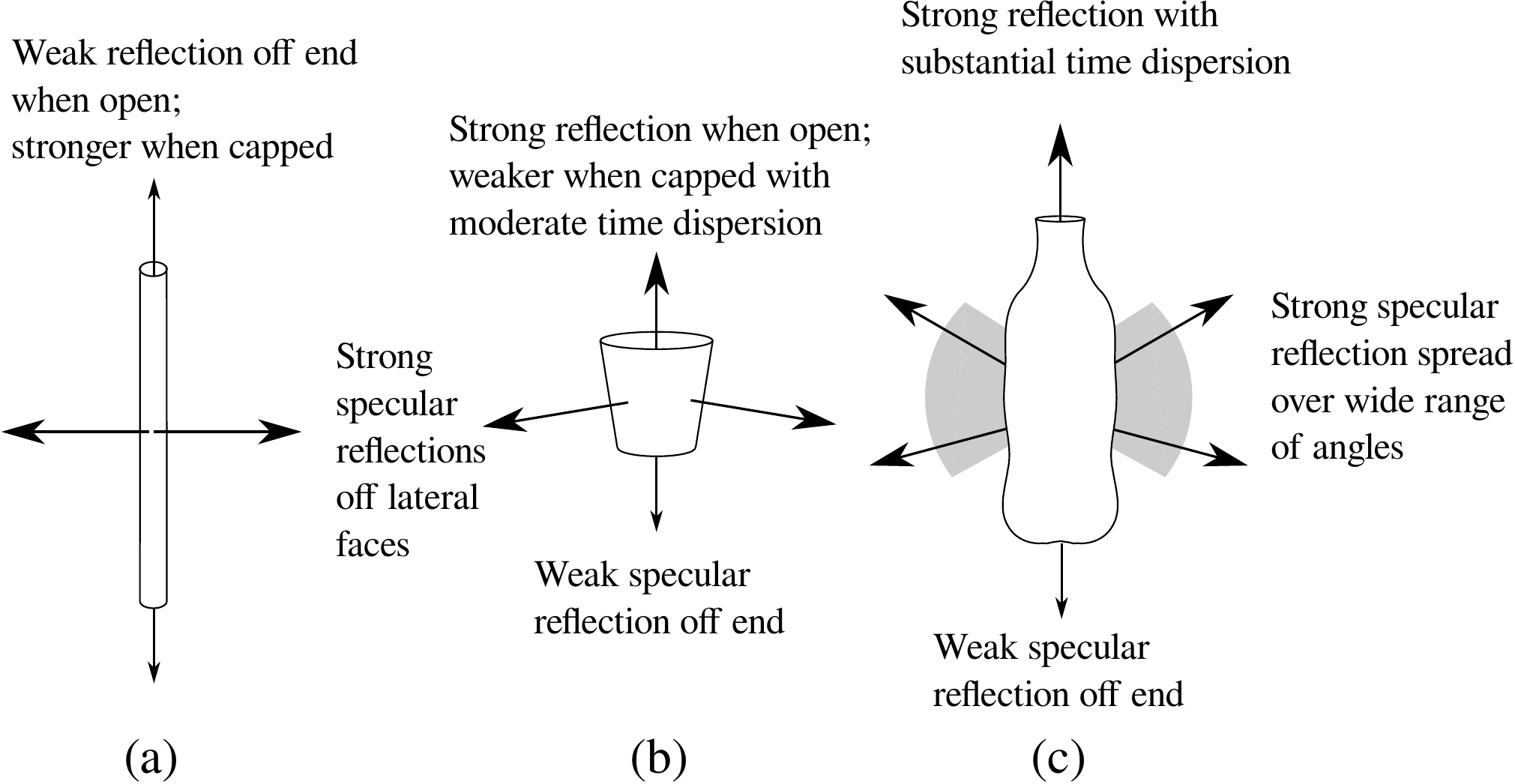}
    \caption{Schematic diagrams of the three targets used in the experiment: (a) a copper pipe, (b) a styrofoam cup, and (c) a coke bottle.}
    \label{fig:airsas_targets}
  \end{center}
\end{figure}

The objects were placed at the center of a rotating turntable and were imaged from a fixed sonar sensor located a few meters away from the center.
The collected data consist of a $2$ dimensional array of data, in which the rows correspond to angles and the columns correspond to range.
The array for each target was the same shape, and contains one sample per degree (a total of $360$ rows) and $1000$ range samples (columns).

The three targets have sonar cross sections that depend on look angle in distinctly different ways, as shown schematically in Figure \ref{fig:airsas_targets}.  The copper pipe and styrofoam cup will show strong specular reflections with narrow beam width from their lateral faces, because these are large flat surfaces.  The lateral faces of the coke bottle can also be expected to show a strong specular reflection, though its beam width will be much larger due to curvature.  When the ends of each object are closed, they present a sub-wavelength aperture, and so should not be expected to have a large sonar cross section.  When the ends are open, the objects can resonate and reradiate the sound energy, so large non-specular reflections from the open ends are possible.  However, since the reradiation process takes time, one should expect time/range dispersion on these echoes.  From the standpoint of the space of echoes, this means that echoes corresponding to the lateral faces will be located far from those from an open end, even if their cross sections are similar.

We can visualize the space of echoes using \emph{principal components analysis} (PCA).  For each target in each configuration, we computed two PCA plots: one for the unmodified data (the signature space for $u$ in $\mathbb{C}^{1000}$), and one with a sliding window with $N=3$ angle-lagged offsets (the phase space of $u$, or the image of $\Phi$, in $\mathbb{C}^{3000}$).  Many choices of the offsets will work.  We arbitrarily chose to use $\tau_1 = 0^\circ$, $\tau_2 = 4^\circ$, and $\tau_3 = 25^\circ$ offsets for the results shown here.  As suggested in Proposition \ref{prop:phase_structure} and later in Section \ref{sec:qualitative}, the sliding window results should ensure that all prominent specular reflections become topological ``loops.'' To verify that assertion, we also display the persistence diagrams for $H_0$ and $H_1$ for both the signature and phase spaces.

\subsection{Copper pipe with capped ends}

\begin{figure}
  \begin{center}
    \includegraphics[width=3in]{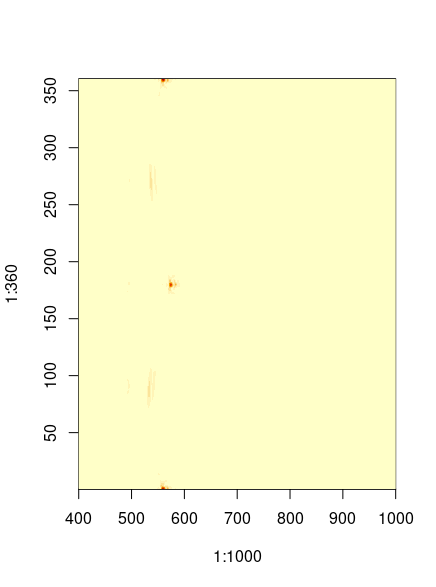}
    \caption{The AirSAS data for copper pipe with capped ends: the horizontal axis is range, the vertical axis is look angle.}
    \label{fig:pipe_capped_raw_data}
  \end{center}
\end{figure}

Figure \ref{fig:pipe_capped_raw_data} shows strong, highly focused specular reflections off the lateral faces at $0^\circ = 360^\circ$ and at $180^\circ$.  The focused nature of these echoes indicates that the lateral faces of the pipe are flat and smooth.
There are weak, angularly dispersed echoes around $90^\circ$ and $270^\circ$, which are likely the ends of the pipe. 

\begin{figure}
  \begin{center}
    \includegraphics[width=5in]{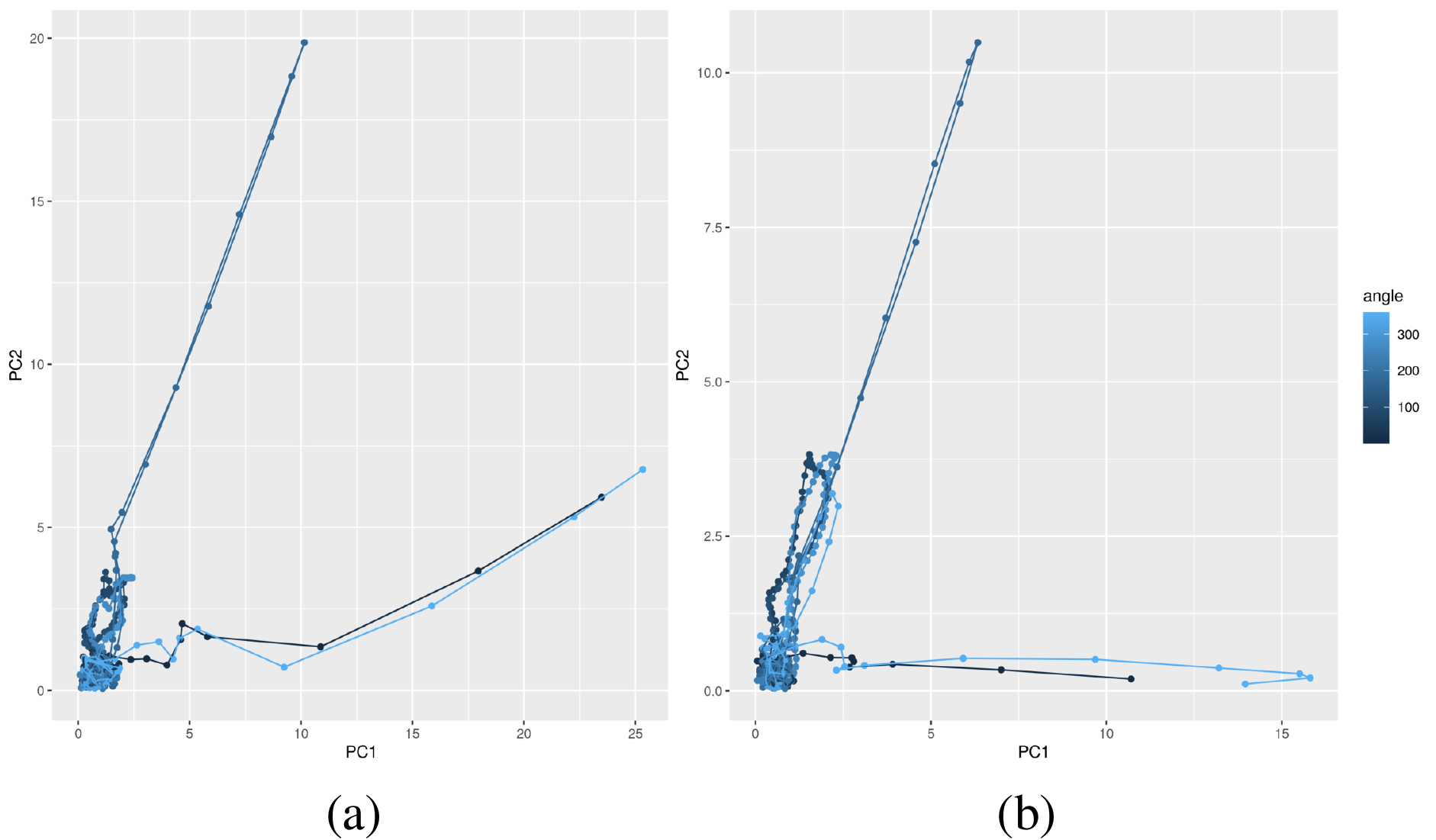}
    \caption{PCA plot of copper pipe with capped ends signature space (a) and phase space (b).}
    \label{fig:pipe_capped_pca_data}
  \end{center}
\end{figure}

In Figure \ref{fig:pipe_capped_pca_data}(a), the strong specular reflection at $0^\circ = 360^\circ$ corresponds to a loop.  The strong specular reflection from the opposite face at $180^\circ$ corresponds to a flare.  These features appear to remain in Figure \ref{fig:pipe_capped_pca_data}(b), though the weaker reflections are now potentially visible as loops.

\begin{figure}
  \begin{center}
    \includegraphics[width=5in]{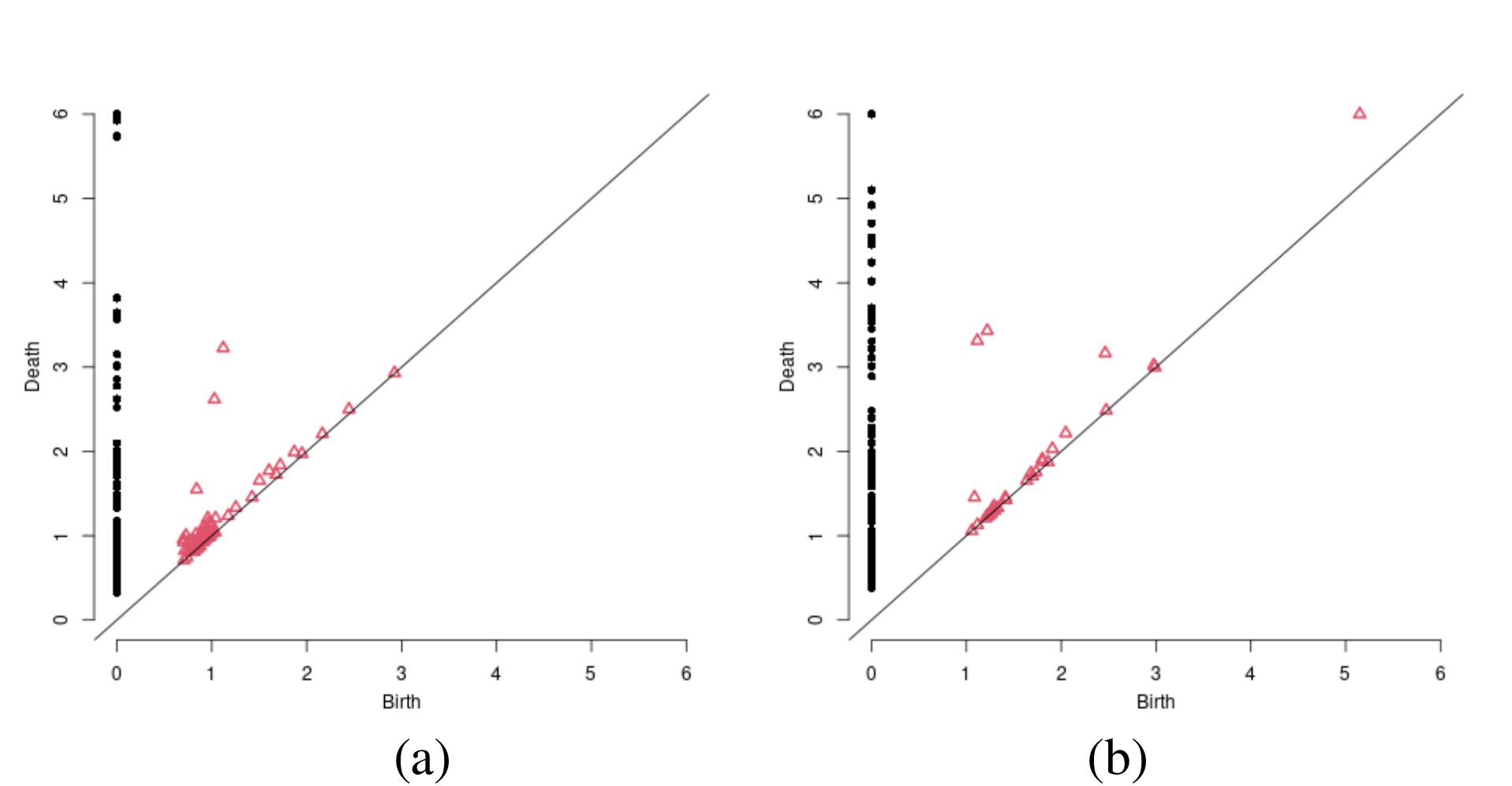}
    \caption{Persistence diagrams of copper pipe with capped ends signature space (a) and phase space (b).  Black points form the diagram from $H_0$; red points form the diagram for $H_1$.}
    \label{fig:pipe_capped_persistence}
  \end{center}
\end{figure}

Figure \ref{fig:pipe_capped_persistence} shows the persistence diagrams for the signature and phase spaces for the copper pipe with capped ends.  The distance from the diagonal corresponds to the \emph{lifetime} of the topological features.  Notice that the lifetime is precisely the difference between death and birth times shown on the persistence diagram.  Longer lifetimes are associated with more robust features.  In Figure \ref{fig:pipe_capped_persistence}(a), there is one clear long lifetime feature for $H_1$ and one shorter (yet still long) lifetime feature.  These two features likely correspond to the specular reflections from the lateral faces.  The situation is much clearer in Figure \ref{fig:pipe_capped_persistence}(b), where there are two clear long-lifetime features (corresponding to the lateral face specular reflections). Two features are also visible away from the diagonal that likely correspond to the reflections off the ends.  This is a clear confirmation of Proposition \ref{prop:phase_structure}.

\subsection{Copper pipe with open ends}

\begin{figure}
  \begin{center}
    \includegraphics[width=3in]{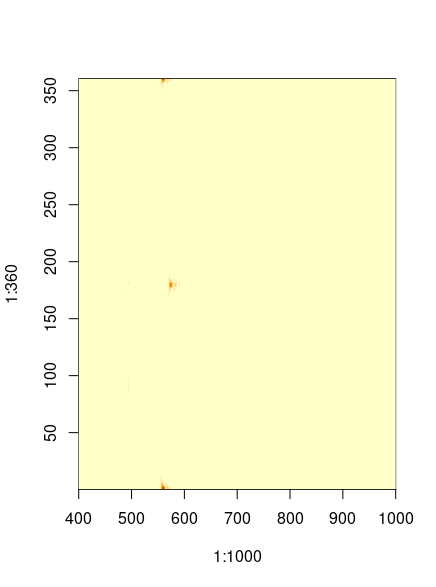}
    \caption{The AirSAS data for copper pipe with open ends: the horizontal axis is range, the vertical axis is look angle.}
    \label{fig:pipe_open_raw_data}
  \end{center}
\end{figure}

In Figure \ref{fig:pipe_open_raw_data}, there are three main specular reflections visible at $0^\circ = 360^\circ$ and $180^\circ$.
However, the strength of the echoes is less than the strength of the comparable echoes in the capped copper pipe. 

We do not know why there are apparently no reflections whatsoever from the ends of the pipe. This may be due to creeping wave losses, or simply the fact that the physical cross-section of the end of an open pipe is acoustically very small.

\begin{figure}
  \begin{center}
    \includegraphics[width=5in]{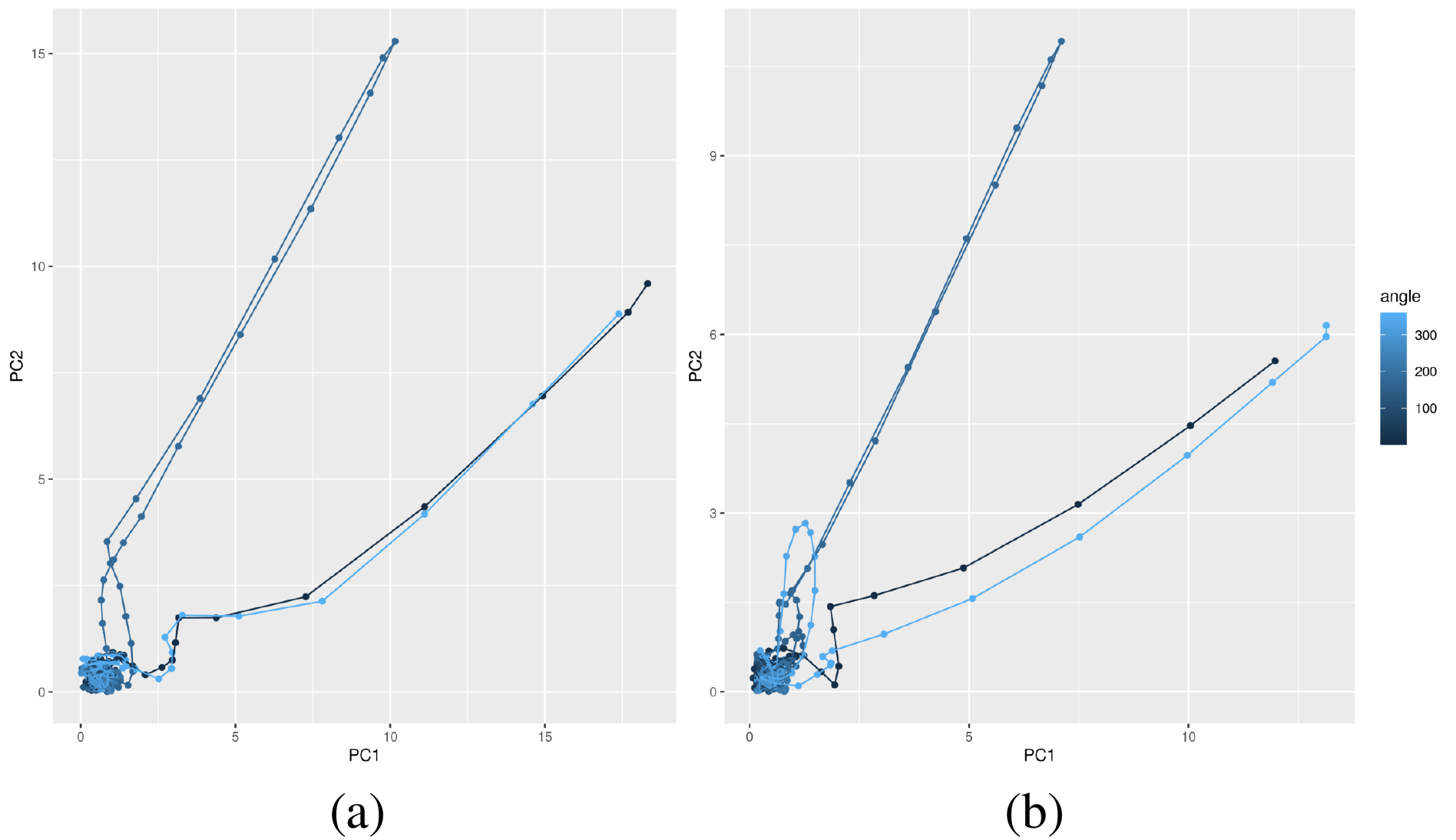}
    \caption{PCA plot of copper pipe with open ends signature space (a) and phase space (b).}
    \label{fig:pipe_open_pca_data}
  \end{center}
\end{figure}

In Figure \ref{fig:pipe_open_pca_data}(a), two flares correspond to the strong specular reflections from the lateral faces.  This structure is also visible in Figure \ref{fig:pipe_open_pca_data}(b), though as explained in the previous sections, the flares are actually loops.  This is not apparent from Figure \ref{fig:pipe_open_pca_data}(b), though it is quite clear if we compare the persistence diagrams in Figure \ref{fig:pipe_open_persistence}.

\begin{figure}
  \begin{center}
    \includegraphics[width=5in]{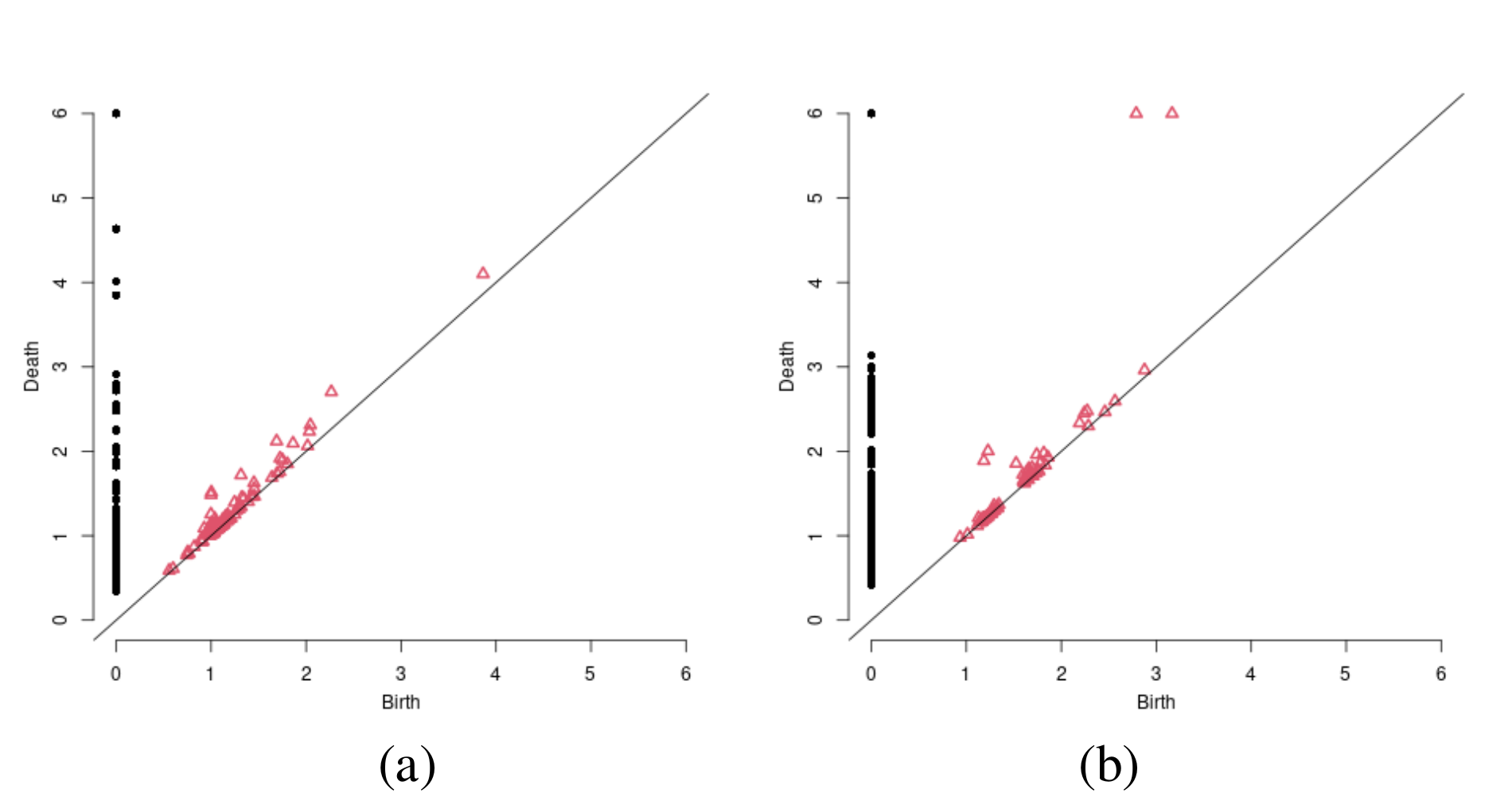}
    \caption{Persistence diagrams of copper pipe with open ends signature space (a) and phase space (b).  Black points form the diagram from $H_0$; red points form the diagram for $H_1$.}
    \label{fig:pipe_open_persistence}
  \end{center}
\end{figure}

Notice that both loops appear quite similar in both frames of Figure \ref{fig:pipe_open_pca_data}.  The differences in cross section may correspond to a slight misalignment of the center of the pipe and the center of the turntable.  This misalignment likely accounts for the slight difference in location of the topological features visible in Figure \ref{fig:pipe_open_persistence}(b).

\clearpage
\subsection{Styrofoam cup with open lid}

\begin{figure}
  \begin{center}
    \includegraphics[width=3in]{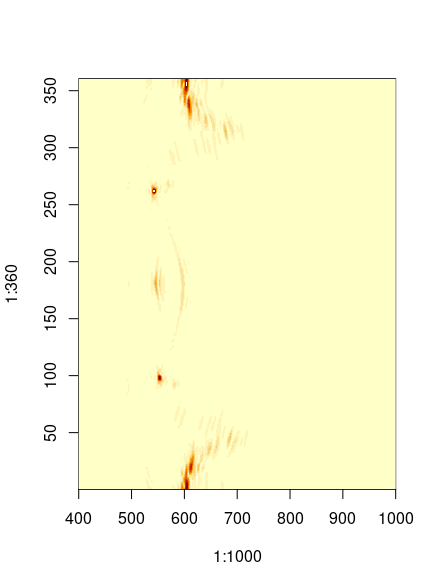}
    \caption{The AirSAS data for styrofoam cup with open lid: the horizontal axis is range samples, the vertical axis is look angle in degrees.}
    \label{fig:cup_open_raw_data}
  \end{center}
\end{figure}

There are three groups of echoes that are evident in Figure \ref{fig:cup_open_raw_data}.
The largest echoes occur approximately centered at $0^\circ=360^\circ$.
These result from the sound entering the mouth of the cup.
The echoes from $0^\circ = 360^\circ$
have a strongly time-dispersed signal that continues to reverberate after the initial echo.
This dispersion results from the pulses bouncing back and forth inside the cup. 

There are two echoes at approximately $100^\circ$ and $260^\circ$, which correspond to the specular reflections from the lateral faces of the cup.  These are strong signals that are highly focused in angle. Notice that the angle difference between these reflections is $160^\circ$, which suggests that the cup has the shape of a truncated cone.

Finally there is a pair of angularly dispersed echoes separated in time centered at $180^\circ$.  The first is at range sample $500$ and the second is at range sample $600$.  There are caused by two distinct mechanisms.  First, the waves bounce off the bottom of the cup, yielding the feature at range sample $500$. Approximately $100$ range samples later, waves return after diffracting off the mouth of the cup.  These diffracted waves have a larger angular dispersion, which is easily confirmed by referring to Figure \ref{fig:cup_open_raw_data}.

\begin{figure}
  \begin{center}
    \includegraphics[width=5in]{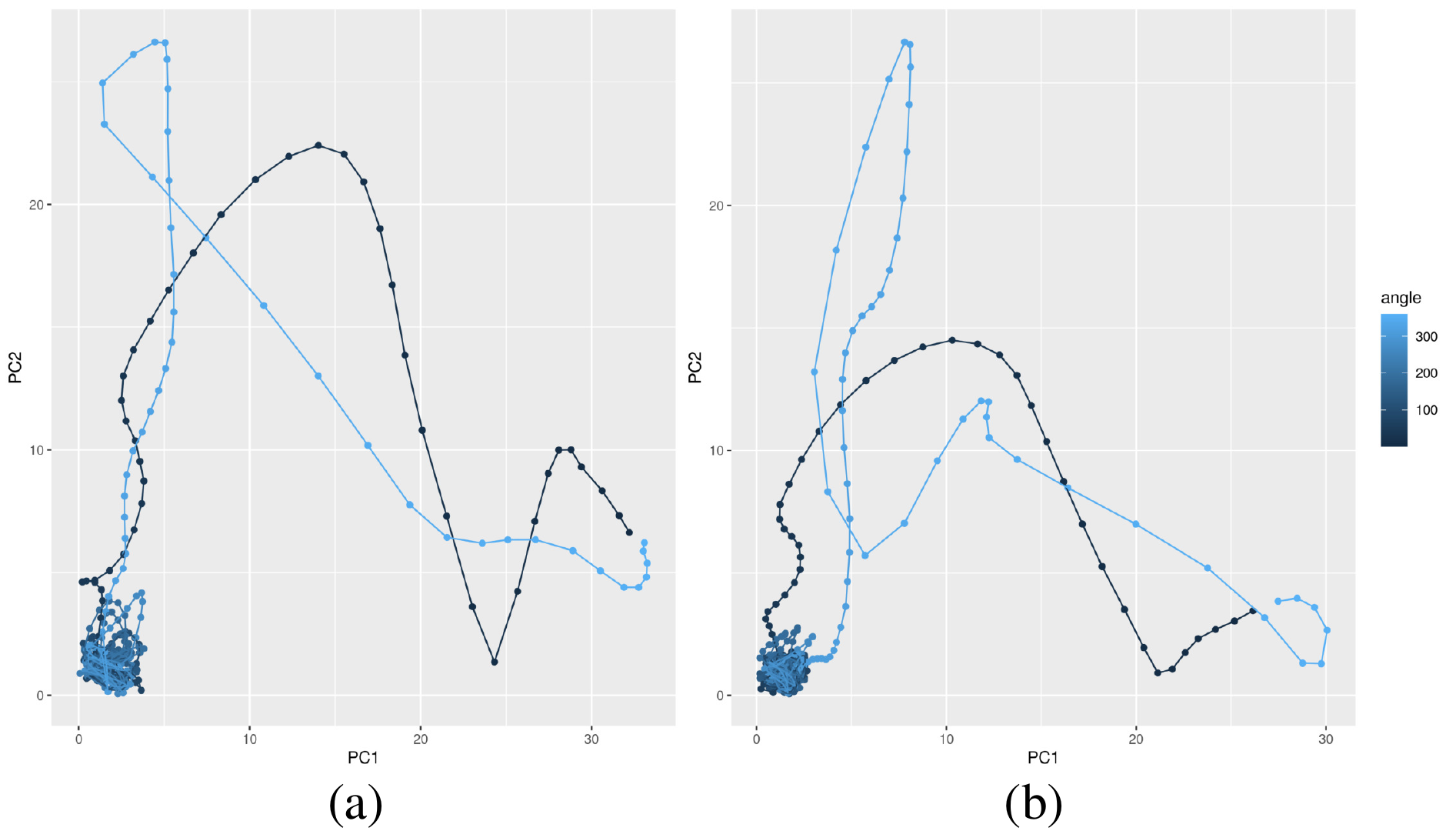}
    \caption{PCA plot of styrofoam cup with open lid signature space (a) and phase space (b).}
    \label{fig:cup_open_pca_data}
  \end{center}
\end{figure}

Figure \ref{fig:cup_open_pca_data}(a) shows the PCA plot of the signature space.
Due to the angular and time dispersion, the large reflection from the mouth of the cup at $0^\circ = 360^\circ$ dominates the plot.
Using sliding windows resolves the other echoes somewhat more clearly, resulting in Figure \ref{fig:cup_open_pca_data}(b).

In Figure \ref{fig:cup_open_pca_data}(a), there are smaller loops protruding from the low-signal ``haze'' around the origin. These loops
correspond to the reflections from $100^\circ$ and $260^\circ$ degrees where the pulses
bounce off the lateral faces of the cup. There are points that correspond
to the reflections from the pulses that reflect off the bottom of the cup at 
$180^\circ$ degrees.   For reasons that are unclear, these features are less visible in Figure \ref{fig:cup_open_pca_data}(b).

\begin{figure}
  \begin{center}
    \includegraphics[width=5in]{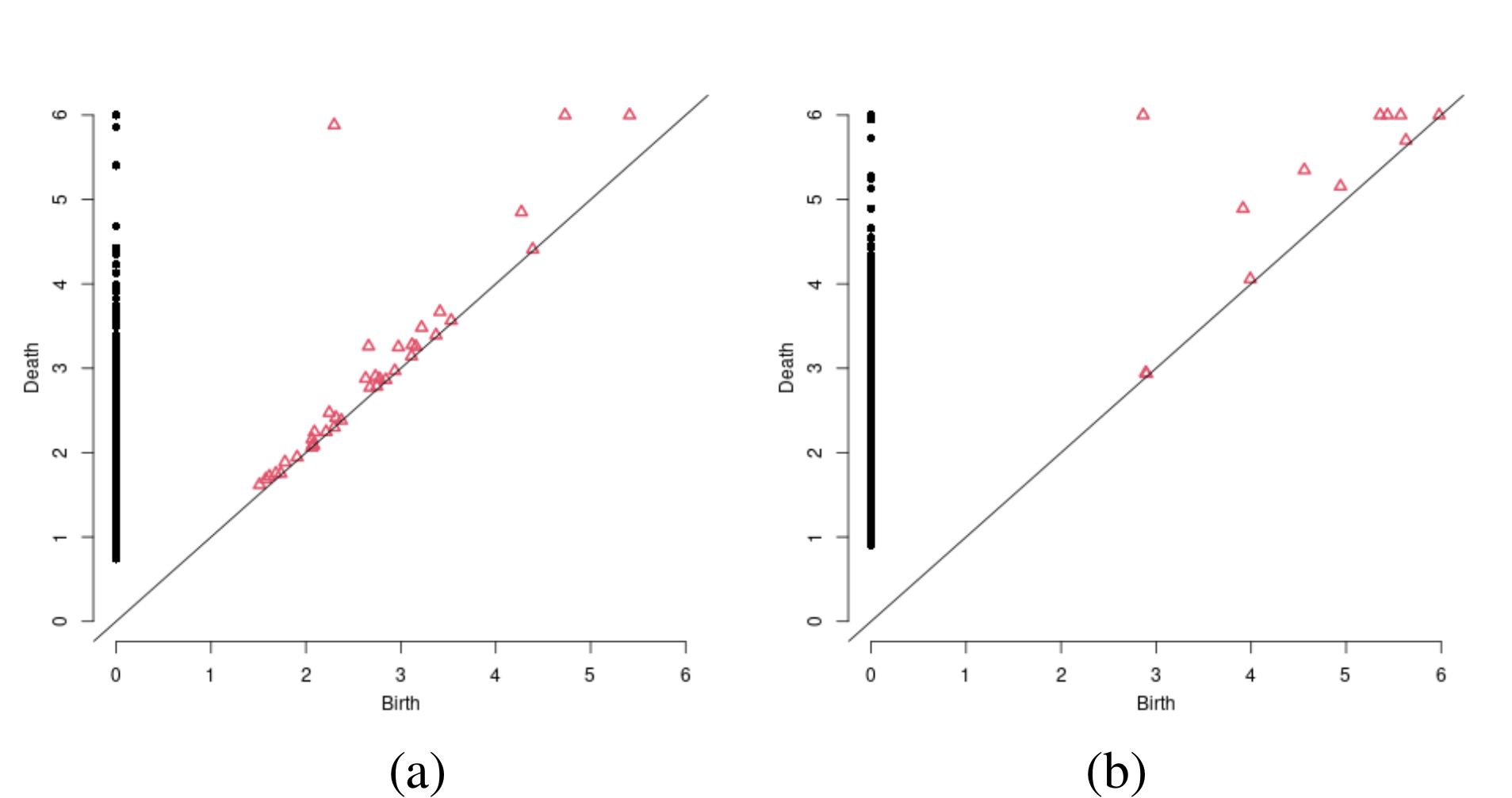}
    \caption{Persistence diagrams of styrofoam cup with open lid signature space (a) and phase space (b).  Black points form the diagram from $H_0$; red points form the diagram for $H_1$.}
    \label{fig:cup_open_persistence}
  \end{center}
\end{figure}

The persistence diagrams for the signature space (Figure \ref{fig:cup_open_persistence}(a)) and the phase space (Figure \ref{fig:cup_open_persistence}(b)) are fairly similar.  Both are dominated by the strong, complex reflection from the mouth of the cup, with several smaller loops also visible.

\subsection{Styrofoam cup with capped lid}

\begin{figure}
  \begin{center}
    \includegraphics[width=3in]{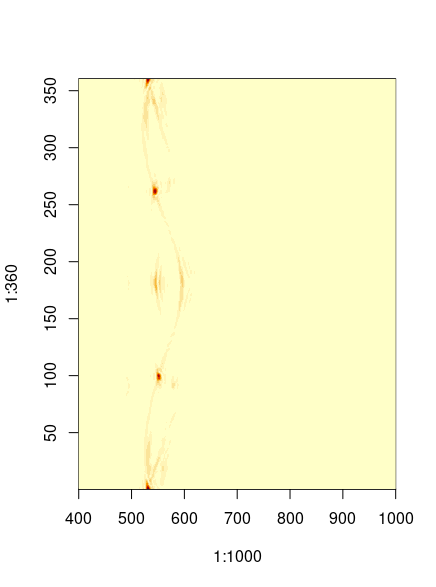}
    \caption{The AirSAS data for styrofoam cup with capped lid: the horizontal axis is range samples, the vertical axis is look angle in degrees.}
    \label{fig:cup_capped_raw_data}
  \end{center}
\end{figure}

We note that Figure \ref{fig:cup_capped_raw_data} is similar to Figure \ref{fig:cup_open_raw_data},
indicating that there is a similarity between the echoes of the capped and uncapped cups.
The strength of the echoes at $0^\circ=360^\circ$ are greatly reduced in Figure \ref{fig:cup_capped_raw_data},
because the sound does not enter the mouth of the cup.
The  initial echo from the entrance of the mouth of the cup is still present, but the echoes after this point (to the right of Figure \ref{fig:cup_capped_raw_data} are 
much more faint. This may indicate that the larger echoes from the uncapped 
cup are a result of the reverberations of the echoes once they enter the mouth of the cup. 

Once again, Figure \ref{fig:cup_capped_raw_data} exhibits focused specular reflections at
$100^\circ$ and $260^\circ$ from where the sound hits the lateral faces of the cup. 

Finally the pair of time-separated echoes at $180^\circ$ are still present, where the sound bounces off the 
bottom of the cup and diffracts from its mouth. These echoes are very close in strength to the echoes at $180^\circ$ 
from the uncapped cup.  From this angle, the uncapped and capped cups appear acoustically similar.

\begin{figure}
  \begin{center}
    \includegraphics[width=5in]{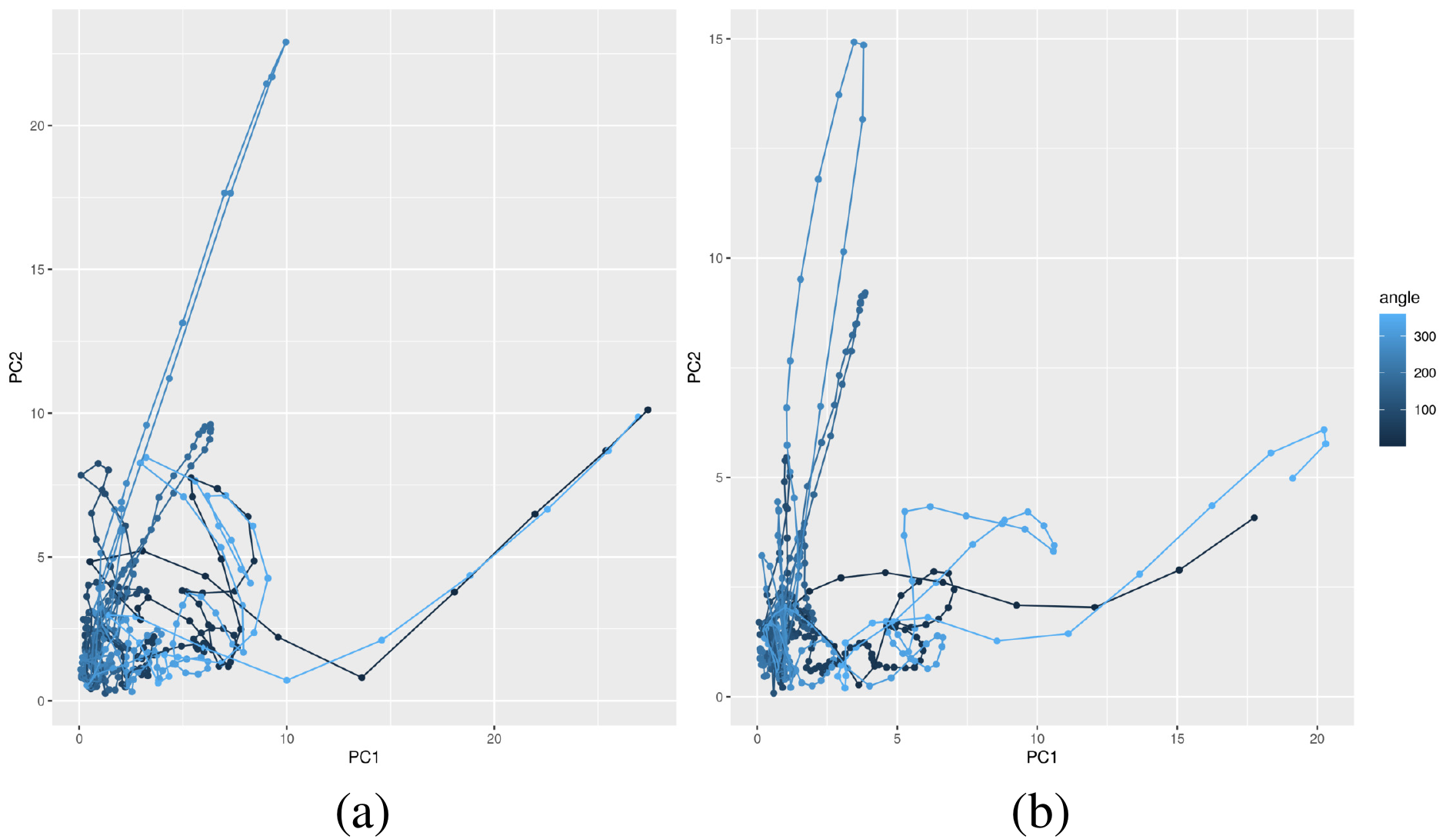}
    \caption{PCA plot of styrofoam cup with capped lid signature space (a) and phase space (b).}
    \label{fig:cup_capped_pca_data}
  \end{center}
\end{figure}

Figure \ref{fig:cup_capped_pca_data}(a) shows the PCA plot of the signature space of the capped cup.
The distinct loop at $0^\circ = 360^\circ$ from the open cup is still present.
Since the reflection from the mouth of the cup is smaller in cross section,
the topological structure of the weaker echoes is more visible for the capped cup in Figure \ref{fig:cup_capped_pca_data} than for the open cup in Figure \ref{fig:cup_open_pca_data}.

There is a loop that appears to 
correspond to the $260^\circ$ reflection. There is also 
a large, distinct, vertical loop that corresponds to the $100^\circ$ reflection.
Finally there is a small loop that corresponds to the $180^\circ$ reflection from the bottom of the cup. 
These loops are more clearly visible in the phase space shown in Figure \ref{fig:cup_capped_pca_data}(b),
which is a reflection of Proposition \ref{prop:phase_structure}.

\begin{figure}
  \begin{center}
    \includegraphics[width=5in]{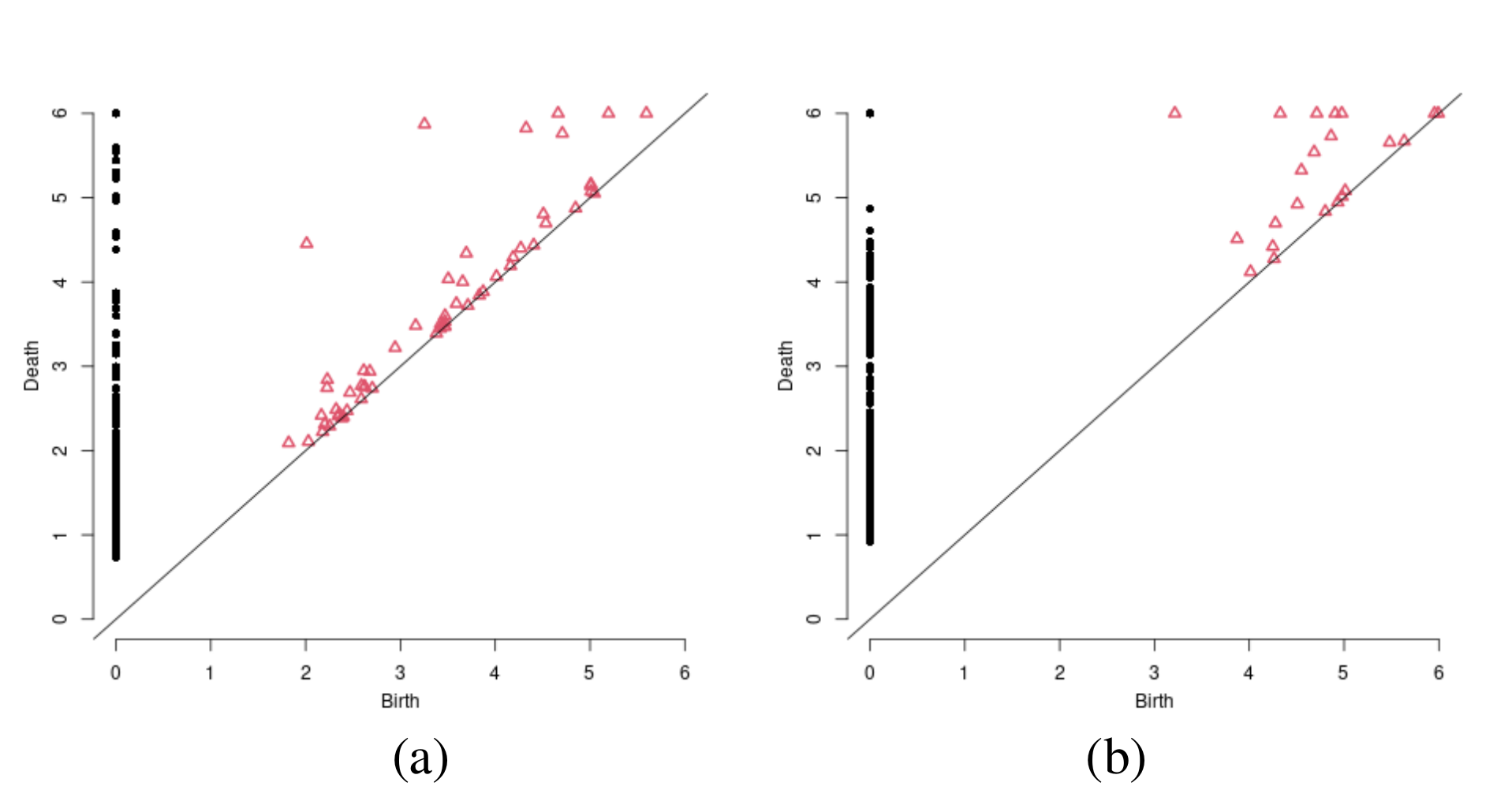}
    \caption{Persistence diagrams of styrofoam cup with capped lid signature space (a) and phase space (b).  Black points form the diagram from $H_0$; red points form the diagram for $H_1$.}
    \label{fig:cup_capped_persistence}
  \end{center}
\end{figure}

The persistence diagrams for the capped cup are shown in Figure \ref{fig:cup_capped_persistence}.  There is considerable complexity visible and interpretation of these plots is difficult.

\clearpage
\subsection{Coke bottle with open lid}

\begin{figure}
  \begin{center}
    \includegraphics[width=3in]{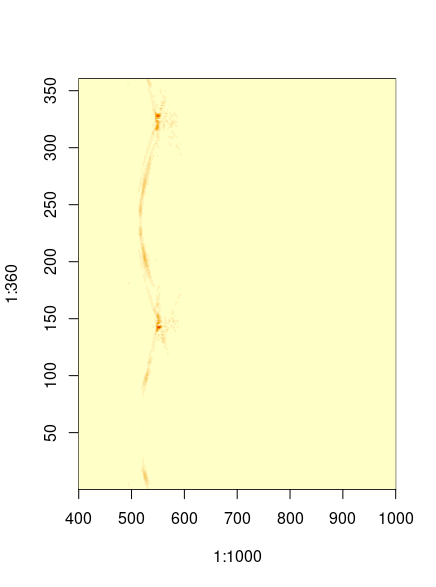}
    \caption{The AirSAS data for coke bottle with open lid: the horizontal axis is range samples, the vertical axis is look angle in degrees.}
    \label{fig:coke_open_raw_data}
  \end{center}
\end{figure}

Figure \ref{fig:coke_open_raw_data} shows the data for the coke bottle with an open lid.
The axis of symmetry for the coke bottle is not aligned with $0^\circ$ or $180^\circ$, as was the case for the previous targets.

Two main echoes are visible at approximately $150^\circ$ and $330^\circ$. 
These echoes result from when the sound hits the sides of the bottle.
The strength of these echoes is much smaller than the comparable echoes
from the cup in either configuration.
This may be due to a difference in material composition of the bottle versus the cup.

There are some angularly dispersed reflections extending between $150^\circ$ and $330^\circ$.
These are likely due to the curvature of the sides of the bottle.
Comparison between Figure \ref{fig:coke_open_raw_data} and Figure \ref{fig:coke_capped_raw_data} suggests that the main difference occurs at an angle of $50^\circ$, which is likely the location of the cap.
Therefore, we conclude that the bottom of the bottle is around $50^\circ + 180^\circ = 230^\circ$, and does not result in a substantial specular reflection.

\begin{figure}
  \begin{center}
    \includegraphics[width=5in]{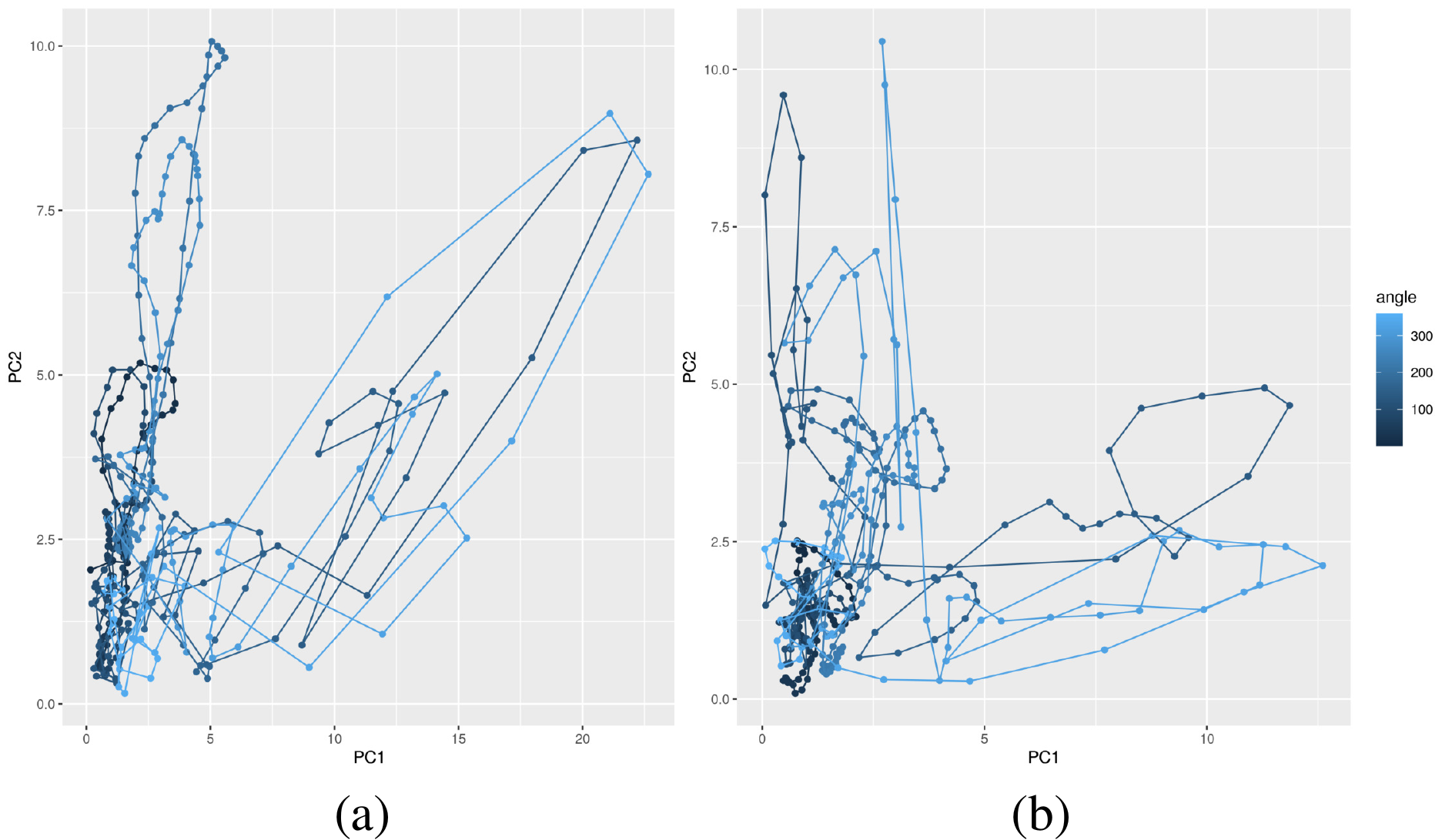}
    \caption{PCA plot of coke bottle with open lid signature space (a) and phase space (b).}
    \label{fig:coke_open_pca_data}
  \end{center}
\end{figure}

In Figure \ref{fig:coke_open_pca_data}(a), there are excursions corresponding to the reflections at $330^\circ$ and $150^\circ$.
The largest loops correspond to reflections located at $0^\circ = 360^\circ$, $100^\circ$, $220^\circ$, and $280^\circ$.

Notice that the echoes from $150^\circ$ and $330^\circ$ have only a few pulses.  This indicates that they are quite focused in angle, and correspond to the specular reflections from the lateral faces of the bottle.  The other loops correspond to reflections that are more dispersed in angle.

The phase space PCA plot in Figure \ref{fig:coke_open_pca_data}(b) shows similar features to the signature space PCA plot in Figure \ref{fig:coke_open_pca_data}(a), though the loops are somewhat more separated.

\begin{figure}
  \begin{center}
    \includegraphics[width=5in]{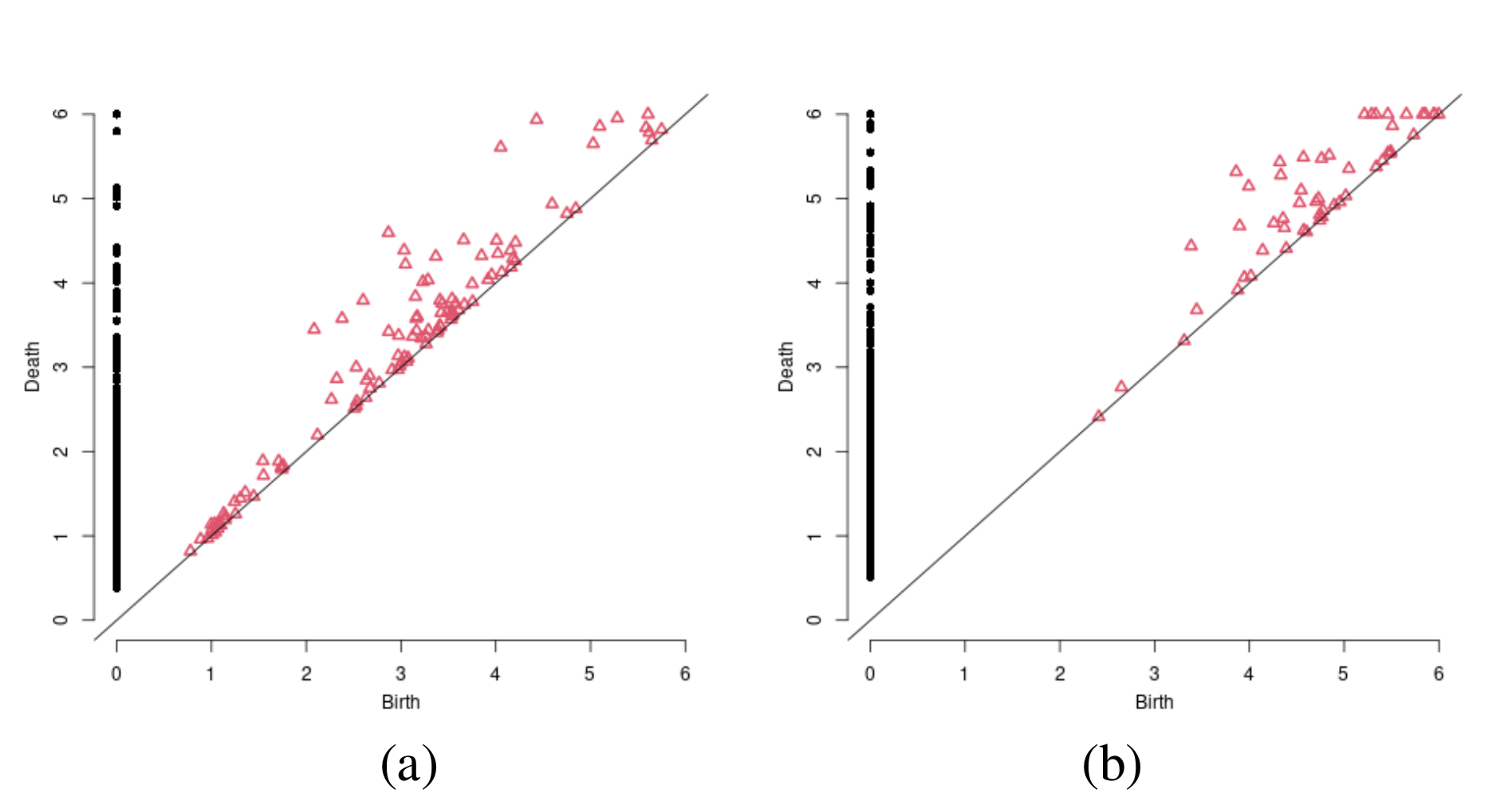}
    \caption{Persistence diagrams of coke bottle with open lid signature space (a) and phase space (b).  Black points form the diagram from $H_0$; red points form the diagram for $H_1$.}
    \label{fig:coke_open_persistence}
  \end{center}
\end{figure}

The persistence diagrams for the coke bottle with open lid are shown in Figure \ref{fig:coke_open_persistence}, but are difficult to interpret.  This is likely due to the angular dispersion, since the loops associated with the specular reflections are smaller.  As a result, they have shorter lifetimes, making them hard to distinguish from the noise.

\subsection{Coke bottle with capped lid}

\begin{figure}
  \begin{center}
    \includegraphics[width=3in]{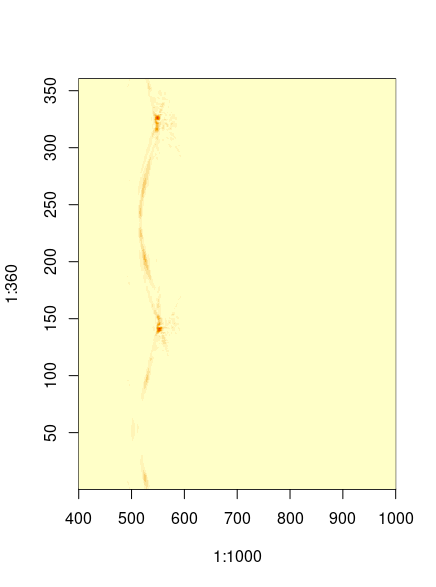}
    \caption{The AirSAS data for coke bottle with capped lid: the horizontal axis is range samples, the vertical axis is look angle in degrees.}
    \label{fig:coke_capped_raw_data}
  \end{center}
\end{figure}

In Figure \ref{fig:coke_capped_raw_data}, there are two main spectral reflections that occur at $150^\circ$ and 
$330^\circ$. These echoes result from when the sound hits the sides of the bottle and are dispersed somewhat in angle.

The strength of these reflections is comparable to that of the uncapped coke bottle in Figure \ref{fig:coke_open_raw_data}.
The reflections are still less than that of those from the cup in either configuration, which may be due to differences in the material.
Again, there are two possible diffuse reflections between $0^\circ - 150^\circ$ and $150^\circ - 330^\circ$.

\begin{figure}
  \begin{center}
    \includegraphics[width=5in]{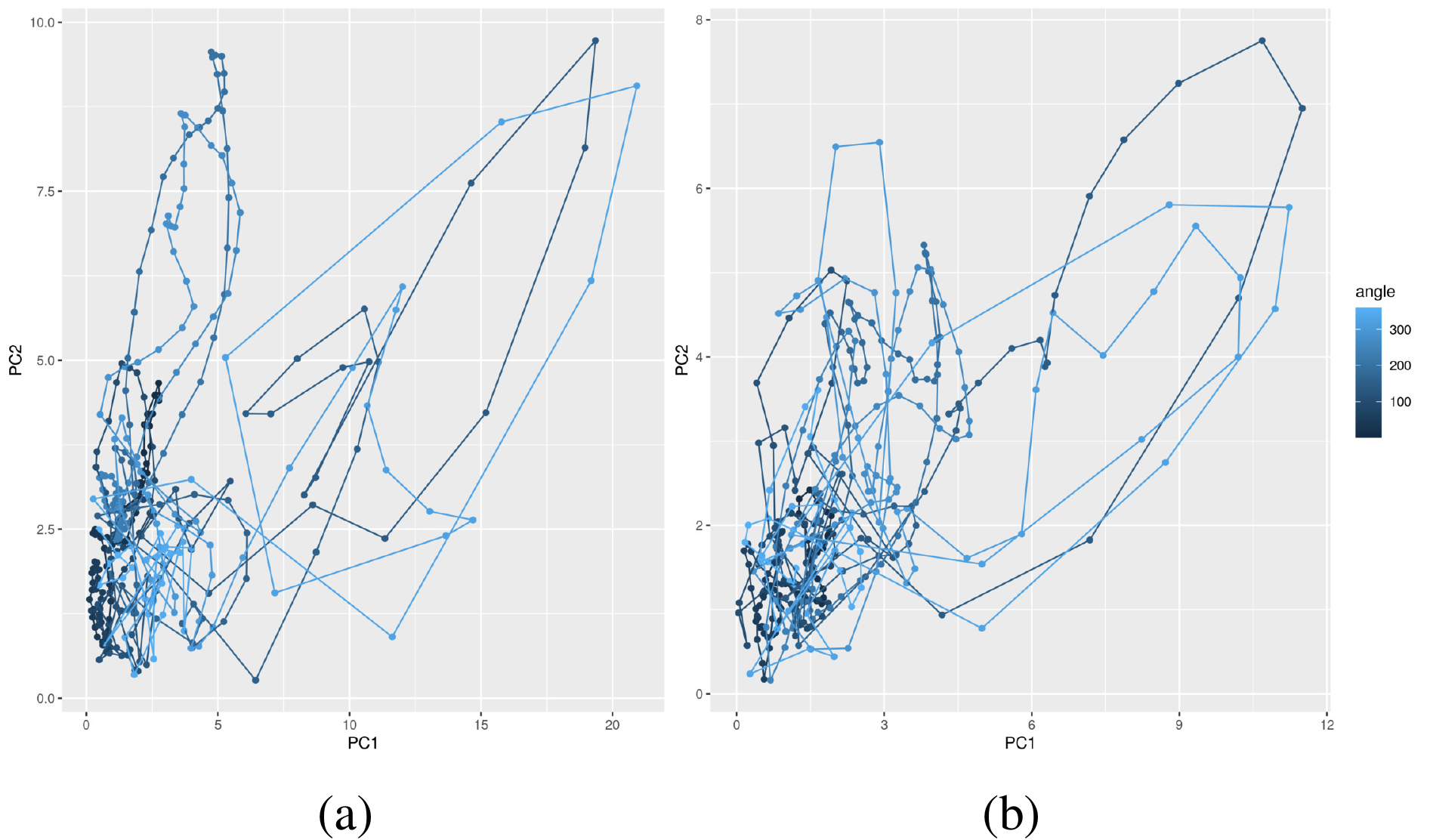}
    \caption{PCA plot of coke bottle with capped lid signature space (a) and phase space (b).}
    \label{fig:coke_capped_pca_data}
  \end{center}
\end{figure}

In Figure \ref{fig:coke_capped_pca_data}, one can observe a similar dichotomy between the sparse loops corresponding to specular reflections with small beam width at $150^\circ$ and $330^\circ$.  These echoes appear to be somewhat stronger than in the open bottle case.
The largest loops correspond to reflections at $0^\circ=360^\circ$, $100^\circ$, $220^\circ$, and $280^\circ$ approximately. 

\begin{figure}
  \begin{center}
    \includegraphics[width=5in]{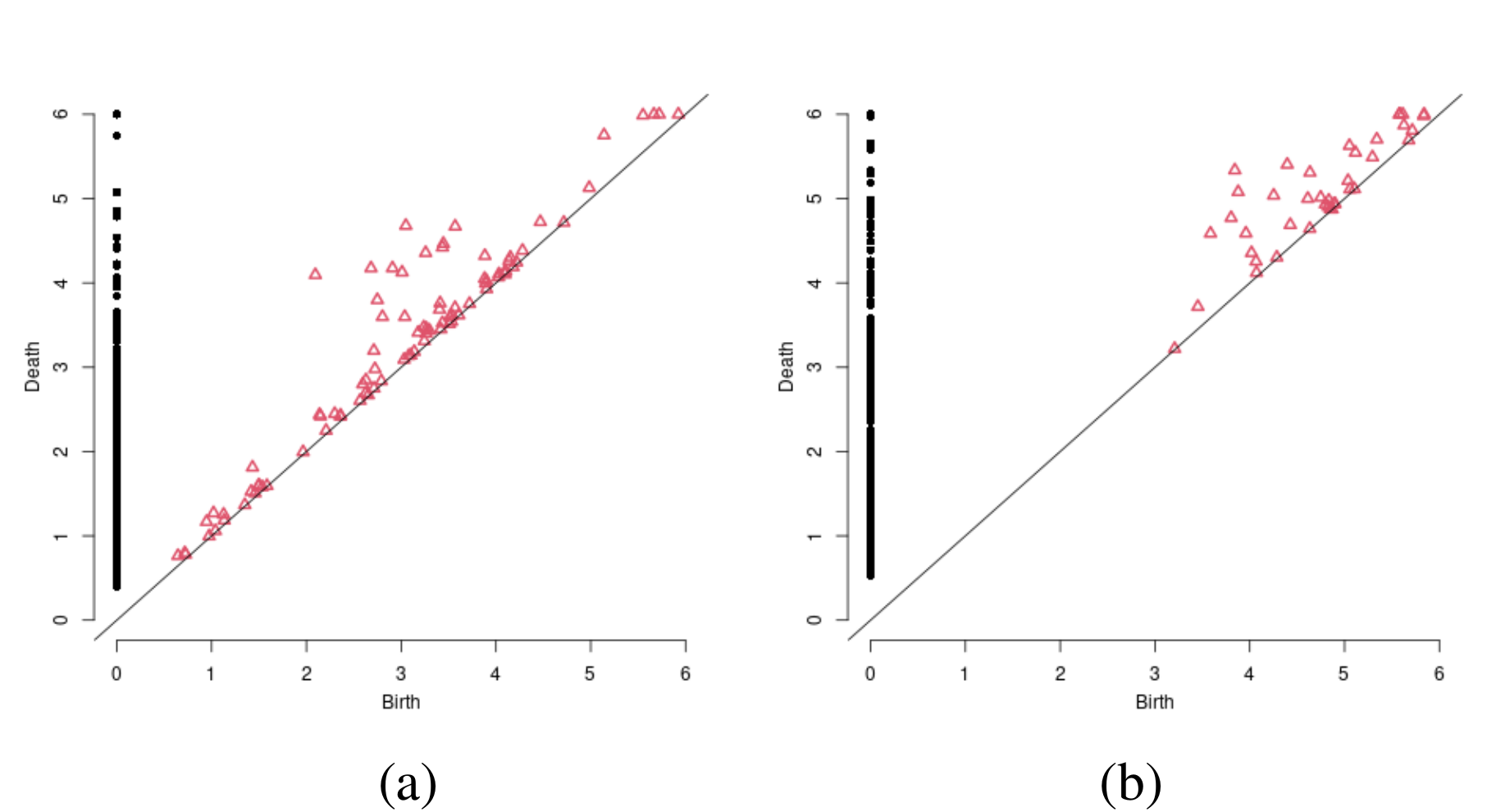}
    \caption{Persistence diagrams of coke bottle with capped lid signature space (a) and phase space (b).  Black points form the diagram from $H_0$; red points form the diagram for $H_1$.}
    \label{fig:coke_capped_persistence}
  \end{center}
\end{figure}

The persistence diagrams for the coke bottle with capped lid are shown in Figure \ref{fig:coke_capped_persistence}, but are difficult to interpret.  This is likely due to the angular dispersion.

\section{Conclusion and recommendations}

This report summarizes several qualitative features that are present in the structure of the space of sonar echoes for CSAS collections.  
There are distinct topological and geometric features in the signature and phase spaces are that are reflected in their corresponding persistence diagrams.
These features appear to correspond directly to the raw sonar data, and by extension relate to the physical acoustics of the targets.
The success of this analysis in providing valid structural information about topological features may serve as a foundation for further research.

\section*{Acknowledgments}

The authors thank Daniel Park (ARL/PSU) and his team for the use of the AirSAS data presented in this report.
 This article is based upon work supported by the Office of Naval Research (ONR) under Contract Nos. N00014-15-1-2090 and N00014-18-1-2541. Any opinions, findings and conclusions or recommendations expressed in this article are those of the author and do not necessarily reflect the views of the Office of Naval Research.

\bibliographystyle{plainnat}
\bibliography{csas_ql_bib}
\end{document}